\newcommand*{\rom}[1]{\expandafter\@slowromancap\romannumeral #1@}
\definecolor{dukeblue}{rgb}{0.0, 0.0, 0.61}
\newcommand{\sv}{\color{black}}
\newtheorem{theorem}{Theorem}[section]
\newtheorem{proposition}[theorem]{Proposition}
\begin{document}
\title{Two-Stage Stochastic Choice Modeling Approach for Electric Vehicle Charging Station Network Design in Urban Communities }

\author{Seyed Sajjad Fazeli, Saravanan Venkatachalam, Ratna Babu Chinnam, Alper Murat
\thanks{S.S. Fazeli, S. Venkatachalam, R.B. Chinnam, and A. Murat are members of the Department of Industrial and Systems Engineering at Wayne State University, Detroit, Michigan (e-mail: sajjad.fazeli@wayne.edu; saravanan.v@wayne.edu;
	ratna.chinnam@wayne.edu; alper@eng.wayne.edu), Corresponding author: S.Venkatachalam}}
\maketitle
\begin{abstract}
Electric vehicles (EVs) provide a cleaner alternative that not only reduces greenhouse gas emissions but also improves air quality and reduces noise pollution. The consumer market for electrical vehicles is growing very rapidly. Designing a network with adequate capacity and types of public charging stations is a challenge that needs to be addressed to support the current trend in the EV market. In this research, we propose a choice modeling approach embedded in a two-stage stochastic programming model to determine the optimal layout and types of EV supply equipment for a community while considering randomness in demand and drivers' behaviors. Some of the key random data parameters considered in this study are: EV's dwell time at parking {\sv location}, battery's state of charge, distance from home, willingness to walk, drivers' arrival patterns, and traffic on weekdays and weekends. The two-stage model uses the sample average approximation method, which asymptotically converges to an optimal solution. To address the computational challenges for large-scale instances, we propose an outer approximation decomposition algorithm. We conduct extensive computational experiments to quantify the efficacy of the proposed approach. In addition, we present the results and a sensitivity analysis for a case study based on publicly available data sources.  
\end{abstract}
\begin{IEEEkeywords}
	two-stage stochastic programming, choice model, electric vehicle, charging network, sample average approximation, L-shaped decomposition
\end{IEEEkeywords}

\section{Introduction}
One of the most promising approaches to alleviating vehicle emissions and satisfying climate targets is the deployment of electric vehicles \cite{wright2005climate}. Lower maintenance costs, lower ownership costs, noise reduction, and charging at home and work and around the community are some of the additional advantages of using EVs. Vehicle purchasing subsidies, public electric charging availability, and carpool lane access are the three most substantial benefits offered to EV consumers \cite{lutsey2015assessment}. In response to the government's promotion of vehicle electrification objectives, the world's major automobile companies are striving to produce affordable EVs for environmentally conscious consumers \cite{kley2011new}. Every year, automotive companies  around the world introduce various new models of EVs (e.g., hybrid vehicles, plug-in hybrid vehicles, and pure battery electric vehicles (BEVs)). The U.S. is one of the growing markets for EVs. However, half of the U.S. population live in areas with fewer than 90 charging infrastructures per million people, which is 70\% below the estimated benchmarks\cite{slowik2018continued}. By the end of 2025, there should be about a 20\% growth in deployment of charging infrastructures per year to support more than three million expected EVs \cite{nicholas2019quantifying}. Therefore, designing a cost-efficient charging network with broad access is critical for supporting the current flourishing trend in the EV market. 
\newline
 \indent
Installation of a public charging station costs at least \$5,000 to \$15,000 \cite{Global}.
Electric vehicle charging stations (EVCSs) can be equipped with different types of chargers that differ in power, installation cost, and charging price. Broadly, EV supply equipment (EVSE) can be classified into level 1, level 2, and level 3,  based on the power supply. Level 1, which  is known as home charging, has a  1.9kW electric power supply and requires between 8 and  30 hours to fully charge an EV's battery, depending on its size. Level 2, known as semi-rapid charging, has a 6.6 kW power supply and a charging time between 4 and 8 hours. Level 3, known as fast charging EVSE, has a 50kW power supply and a charging time of less than 30 minutes; this is considered to be the most expensive charger.
Given the availability of chargers with different capabilities and prices, it is worth considering EV users' choices of charger levels when establishing an optimal EVCS network. 
\newline
\indent
A study of EV users' charging behaviors, especially their preferences in charging levels and locations, can help increase the accessibility of charging stations for EV users, and this can lead to widespread EV adoption. Also, since establishing an EVCS network is a strategic decision, considering the randomness in demand and analyzing EV users' travel patterns, charging behaviors, and infrastructure utilization will help in designing a charging station network that provides better access \cite{xu2017joint}.
Increasing the overall utilization of charging stations can potentially increase investment opportunities for EVCS providers and automobile makers. The analysis in \cite{nicholas2019quantifying} indicates that EVCS providers can make low-risk and high-utilization investment decisions by expanding charging infrastructures in such a way that the designs of charging outlet networks are matched to the complex driver charging patterns. Different types of users (residential, visitors, employees, fleet users) have different charging needs, as well as different dwell times, frequencies of charging and states of charge (SOCs). Since an EV can be recharged at home, at public charging stations, or at private working places, a wide range of consumers demand several different power supply options. Furthermore, charging prices can significantly affect EV owners' choices. The importance of this last factor can vary depending on people's socioeconomic characteristics. The authors of \cite{wen2016modeling} showed that EV owners are less likely to use charging stations when the charging costs are higher or when their battery has a sufficient driving range for reaching the next charging opportunity.
 \newline
 \indent Many of the existing studies on the charging facility location problem are based on the assumption that charging service demands are deterministic. However, the real demand is affected by various sources of uncertainty, such as the day of the week, the time of day, the purpose of the trip, the location of the final destination, and the driver's willingness to walk. Thus, there may be significant differences between the optimal solutions for deterministic and stochastic models. Stochastic programming is a modelling approach for making decisions under uncertainty. Discrete choice analysis has also proven to be a useful strategy for analyzing and predicting EV drivers' decisions regarding their choices of location and chargers. In this study, considering the uncertainties in EV users' demand, we propose a two-stage stochastic programming model with an embedded choice model representing EV drivers' choices of chargers for designing an optimal network of {\sv charging stations} for a community. Since two-stage stochastic programming models often require a large number of scenarios for good approximations of the expectation function, we use the sample average approximation (SAA) method, a Monte Carlo simulation-based sampling technique. Another challenge for the two-stage stochastic programming approach is the computational burden arising from second-stage scenarios, so we use {\sv a} L-shaped decomposition algorithm with single- and multi-cut variants to solve the model efficiently. Finally, we evaluate our proposed two-stage model and our approach to its solution with a case study based on data representing the midtown area of Detroit, Michigan, in the U.S.
\newline
\indent The contributions of this study include the following: (1) we formulate a two-stage stochastic programming model with an embedded choice model for locating charging facilities, and we determine the types of {\sv chargers} to be installed in these facilities based on EV drivers' choices and behaviors and other random parameters; (2) we include various uncertainties in the model, such as EV demand flows, EV drivers' charging patterns, SOCs, arrival and departure times, the purpose of arrivals in the community, and preferred walking distances; (3) we develop an outer-linearization-based decomposition algorithm and conduct extensive computational experiments with multiple variations to demonstrate the efficacy of our algorithm; and (4) we conduct a case study using data representing the midtown area of Detroit, Michigan, in the U.S. and provide post-analysis insights for improving accessibility and transportation choices based on our proposed framework. {\sv In addition, we conducted a data-driven simulation where the proposed method is compared to two other configurations from the literature.} 
\newline
\indent
The remainder of this paper is organized as follows: Section \ref{lit} reviews the related literature. Section \ref{Pre} describes the various sources of uncertainty that we consider in the demand generation process as well as our construction of the utility function for the choice model. Section \ref{formula} provides a mathematical formulation of the problem along with a subsequent reformulation. Section \ref{Method} introduces the solution methodologies that we implemented to solve large-scale instances. Section \ref{Case} presents the case study, computational experiments, {\sv data-driven simulation} and various insights from our sensitivity analysis. Finally, Section \ref{Con} provides concluding remarks. 

\section{Literature Review}\label{lit}
In this section, we first review the literature related to deterministic and stochastic approaches for the EV charging location problem. Then we provide details about choice models for the behaviors of the EV drivers.\\
\indent A majority of the studies in the literature on the EV charging location problem consider deterministic models. A capacitated refueling location model with limited traffic flow was introduced in \cite{upchurch2009model} to maximize the vehicle miles traveled by alternative-fuel vehicles.
A reformulation of the flow-refueling location model was proposed in \cite{mirhassani2012flexible} to decrease the computational effort needed to  solve large-scale set covering and the maximum coverage problem.
The research in \cite{he2013optimal} explored the allocation of public charging stations to increase the social welfare associated with transportation and power networks. Considering users' daily travel,  \cite{zhu2018charging} introduced a novel model to determine EVCS locations while minimizing the charging station installation and management costs. The authors of \cite{xi2013simulation} developed a simulation-optimization model for EVCSs to maximize the service level for EV drivers. The results show that a combination of level 1 and level 2 chargers is more desirable than installing only level 1 chargers. The research in \cite{cavadas2015mip} addressed the EVCS problem in an urban area. The authors proposed a mixed integer programming (MIP) model for locating slow-charging stations. They considered travelers' parking locations as well as their daily activities to aggregate the demand. An optimization model based on travel behavior to optimally install charging stations was developed in \cite{shahraki2015optimal}. The research in \cite{wang2013locating} used an MIP model to determine the locations for multiple types of charging stations. The results indicated that an increase in EV ranges allows installing fewer charging stations. The authors of \cite{lam2014electric} formulated a charging station location problem with a focus on human factors.
To support recent developments in the electrification of public transportation, the authors of  \cite{cai2014siting,li2017improving}, and \cite{wang2016electric} developed models to optimally determine the locations of charging stations for electric taxis and buses. {\sv The impact of different types of EVs  (\cite{fernandez2010assessment,tushar2015cost}) , locations and sizes of charging infrastructures (\cite{luo2015placement,sadeghi2014optimal,sheppard2016cost,xu2013optimal}) on power networks has also been investigated by various studies. Furthermore, various concerns from both the traffic system and power system perspectives are addressed by few studies \cite{wang2013traffic,yao2014multi,yu2015balancing}. }
\\
\indent
Even though it is important to consider uncertainties for strategic and tactical planning, as decisions made using deterministic parameters can under- or overestimate the reality \cite{birge2011introduction}, only a few research studies consider uncertainties for EV infrastructure planning. The research in \cite{faridimehr2018stochastic} developed a decision support system consisting of a modeling framework using a stochastic model and the Monte Carlo sampling method to optimally design an EV charging network. The researchers considered uncertainties in SOCs, dwell times, demand distribution, driver preferences regarding charging, the market penetration of EVs, and also drivers' willingness to walk. They used SAA and a heuristic to tackle the computational intractability of the stochastic model. The present research extends this work by considering different types of chargers and their associated preferences by the EV drivers. The authors of \cite{pan2010locating} developed a two-stage stochastic model for locating charging stations to support both the transportation system and the power grid. They considered uncertainty in the demand for batteries, loads, and generation of renewable power sources. The research in \cite{hosseini2015refueling} incorporated uncertainty regarding the traffic flow into both capacitated and uncapacitated versions of a two-stage stochastic model to locate EVCSs. With the objective of maximizing both the miles traveled by EVs and environmental benefits, the research in \cite{arslan2016benders} presented the EVCS problem as an extension of the flow refueling location problem. The authors considered both hybrid and single-fueled vehicles, and they proposed using Benders' decomposition approach for solving large-scale instances. Accounting for EV drivers' route choice behavior, \cite{riemann2015optimal} suggested a flow-capturing model with a stochastic user equilibrium to locate wireless charging infrastructures.

\indent 
Charging behavior has been studied by numerous authors from different perspectives, which are multifarious amongst drivers \cite{zoepf2013charging,franke2013understanding}. To develop models that evaluate EV drivers' preferences for charging services, it is necessary to understand individuals' behaviors \cite{daina2017electric,xiong2017optimal}. The authors of \cite{xu2017joint} developed a mixed logit model to explore the factors that affect BEV users in Japan. They considered fast and normal types of chargers along with specific locations such as home, company, and public stations for installing {\sv chargers}. They identified battery capacities and initial states of charge as the main predictors for drivers' charging and location choices. The research in \cite{liu2017locating} implemented a tri-level design that considers consumers' charging and routing choices to locate multiple levels of charging facilities, including wireless charging. Findings from a national survey showed that recharging times have a considerable influence on consumers' preferences \cite{hidrue2011willingness}. The effects of policies on charging behavior and EV adoption were studied in \cite{wolbertus2018policy}. The authors used a large data set to investigate the influence of daytime and free parking policies on EV drivers' charging behaviors. The research in \cite{sun2015charge} focused on charging time behavior using a mixed logit model, and the predictors related to charging or not charging were SOCs, number of days between charging, and kilometers of travel. The results show that fast charging is preferred to normal charging. The authors of \cite{he2015deploying} suggested a tour-based BEV network equilibrium model to evaluate drivers' behaviors. Recently, the authors of \cite{hardman2018review} published a literature review on consumers' preferences for plug-in vehicle charging stations. They focused on approaches related to the expansion of {\sv charging} infrastructures based on users' interactions with EV charging stations.
\newline
\indent Choice models have recently been proposed for various purposes. The authors of \cite{benati2002maximum} proposed integrating a choice model within an optimization framework for locating new facilities in a competitive market. They used a random utility model to model customers' behavior with the aim of predicting the market shares of the locations.  In \cite{ krohn2016preventive}, the authors considered clients' utility functions, with waiting time for an appointment and the quality of care used as variables for determining health-care facility locations. Similarly, the authors of \cite{ garcia2015robust} applied a robust approach to selecting new housing programs. They incorporated a utility function with a linear combination of the features of locations and their values for potential buyers in a mathematical model with the aim of maximizing customers' satisfaction.\\
\indent Only a few studies have included multiple types of charging stations in their mathematical models (\cite{wang2013locating,cui2018locating,you2014hybrid}), and none of these have considered EV drivers' charging behavior in locating multiple types of charging stations. To the best of the authors' knowledge, the present study is the first attempt to embed a choice model within a two-stage stochastic programming approach. Also, although many studies have considered the EVCS location problem for state-wide networks (\cite{you2014hybrid,riemann2015optimal,chung2015multi,mak2013infrastructure,bayram2013decentralized,marmaras2017simulation}), only a few (\cite{faridimehr2018stochastic,zhang2015integrated}) have investigated the problem for an urban area.
\section{Preprocessing} \label{Pre}
In this study, we consider parking facilities as potential candidates for installing {\sv chargers}. Drivers select parking locations based on their preferences regarding walking distances to their final destinations. We assume that if {\sv chargers} are installed in any of the parking lots that are within a driver's preferred walking distance, the driver will be attracted to one of these, depending on the availability of that station at the time of arrival. If there are no charging stations within a driver's preferred walking distance, we do not consider that driver as contributing to the demand in our model. In the two-stage stochastic model, demand is a multi-variate random variable whose realizations are represented as scenarios. The randomness in the demand comes from many different sources, such as drivers' arrival time and their purpose in driving to the community, the duration of drivers' activities, the SOCs of EV batteries at the time of arrival, and the distances the drivers are willing to walk, based on demographics, community size, and weather conditions. The following subsection describes the uncertainties that affect the demand for public EV charging stations, based on previous work in \cite{faridimehr2018stochastic}.

\subsection{Uncertainties in Demand}
 \subsubsection{Dwell Time}Based on National Household Travel Survey (NHTS) data, we selected work, study, social, family, shopping, and meals as six different final destination categories for the EV drivers. The average dwell time reported for each category is shown in Fig. \ref{Dwell}. We used a Weibull distribution, as suggested in \cite{zhong2008studying}, to represent the duration of weekday and weekend activities.
  \begin{figure}[!htbp]
	\centering
	\includegraphics[scale = 0.35]{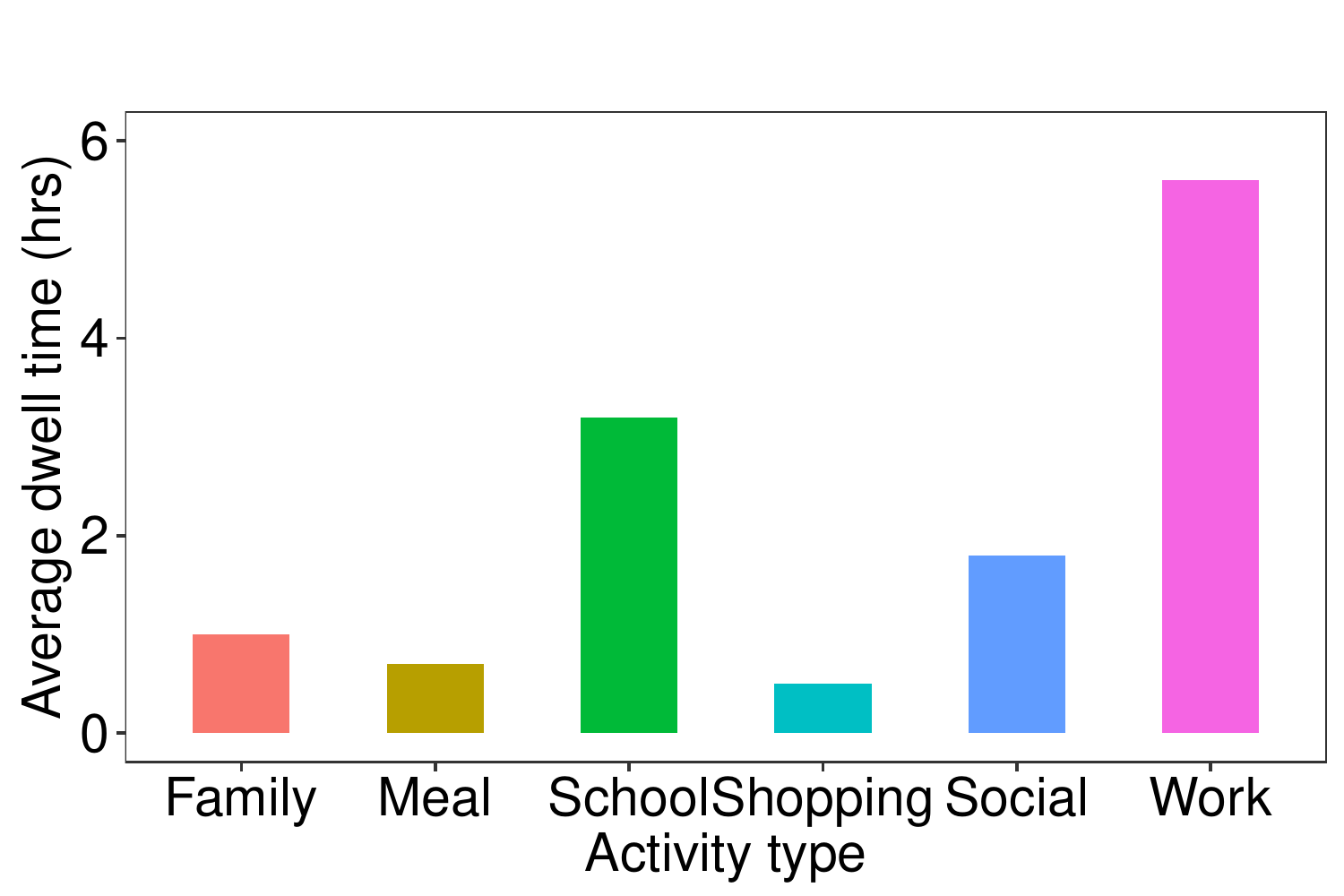}
	\captionsetup{justification=centering}
	\caption {Average dwell times for activity types; Sources: \cite{brooker2015identification} and \cite{krumm2012people}.}
	\label{Dwell}
\end{figure}
 \subsubsection{Arrival Time}EV drivers' arrival times in a community depend on the time of day, the day of the week, and the commuters' type of activity. On weekends, people tend to participate in social activities and visit shopping malls and their families more than on weekdays. On weekdays, most of the demand for {\sv chargers} comes from people who are traveling to work or school. Hence, a different demand pattern for charging stations arises on different days of the week. Fig. \ref{Arrival} shows how the demand for charging stations depends on the time and the type of day. On weekdays, the maximum demand occurs during the morning when people are arriving at work or school; in contrast, the maximum demand on weekends usually occurs around noon, when people are traveling to shopping malls and social places. The study in \cite{ozdemir2015distributed} concluded that the Weibull distribution is the best-fitting distribution for arrival times at parking lots. Therefore, we use two Weibull distributions to estimate these arrival times for weekends and weekdays.
\begin{figure}[!htbp]
	\centering
	\includegraphics[scale = 0.35]{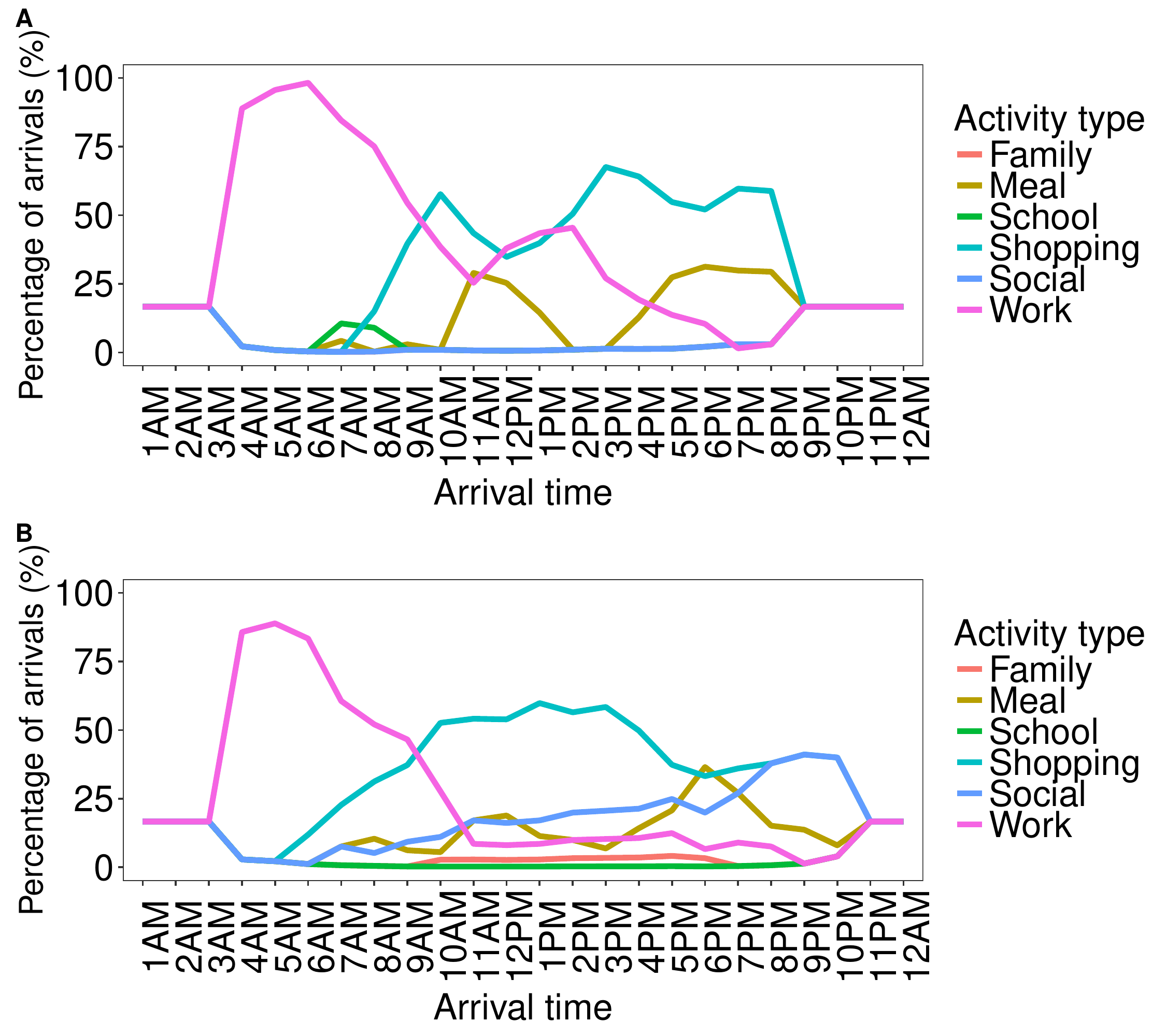}
	\captionsetup{justification=centering}
	\caption {The expected percentage breakdown for various activities by vehicle arrival times on A) weekdays and B) weekends; Sources: \cite{brooker2015identification} and \cite{krumm2012people}.}
	\label{Arrival}
\end{figure}
\subsubsection{State of charge}\label{SOC_section} While the demand for EVs is increasing due to environment- and economy-related concerns, EVs have a limited battery capacity for charging and use. Many factors, such as commuting distance, the driver's behavior, traffic congestion, and weather conditions, can affect an EV's SOC at the time of its arrival at a final destination  (\cite{you2017scheduling,torabbeigi2019drone}). Similar to \cite{fan2015operation}, we consider a normal distribution with a mean of 0.3 and a standard variance of 0.1 for the SOC of EVs when they arrive at charging locations. Fig. \ref{SOC} shows the initial SOC distribution for arriving EVs.
\begin{figure}[!htbp]
	\centering
	\includegraphics[scale = 0.35]{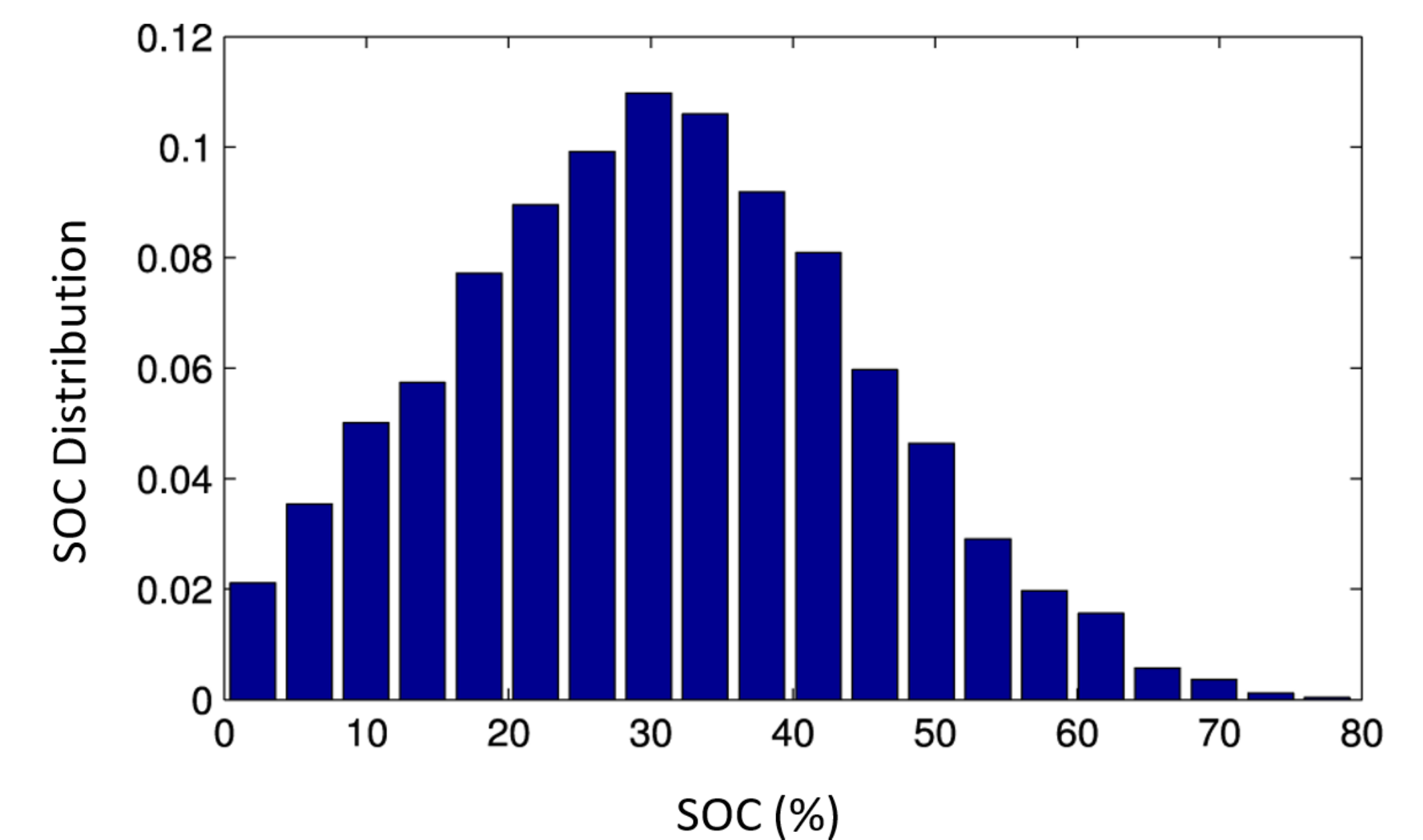}
	\captionsetup{justification=centering}
	\caption {Initial state-of-charge distribution for arriving electric vehicles; Source: \cite{fan2015operation}.}
	\label{SOC}
\end{figure}
  \subsubsection{Willingness to walk}Sociodemographic characteristics such as age, gender, education level, and occupation affect drivers' willingness to walk. Walking distances are typically shorter for children and the elderly than for the young and middle-age groups. Studies have also indicated that walking preferences are associated with many design factors, such as street connectivity, pedestrian infrastructure, and mixed land uses \cite{b99a4680343840efba8de6fc768ba7d8}. Many authors have implemented a distance decay function to illustrate individuals' willingness to walk or bicycle. The decay function parameter depends on the type of the final destination, and research using distance decay functions has also revealed different behaviors for people that live in different areas. A negative exponential distribution was used in \cite{yang2012walking} to estimate walking trips over short distances. The authors {\sv defined} the distance decay function as $P(d)= e^{-\beta \times d}$, which reflects the total percentage of walking trips for which the distance is greater than or equal to $d$ given in miles; here $\beta$ is the decay parameter. The authors used 2009 NHTS data to approximate the decay parameter $\beta$ for different groups and trip purposes. In our study, we consider the effects of the destination activity type, the season, the community size, and the region of the U.S. on drivers' walking preferences. The variation for each of these factors on the walking distance preferences estimated by \cite{yang2012walking} is shown in Fig. \ref{Walk}, and details are provided in Table \ref{WalkingPreference}.
\begin{figure}[!htbp]
	\centering
	\includegraphics[scale = 0.35]{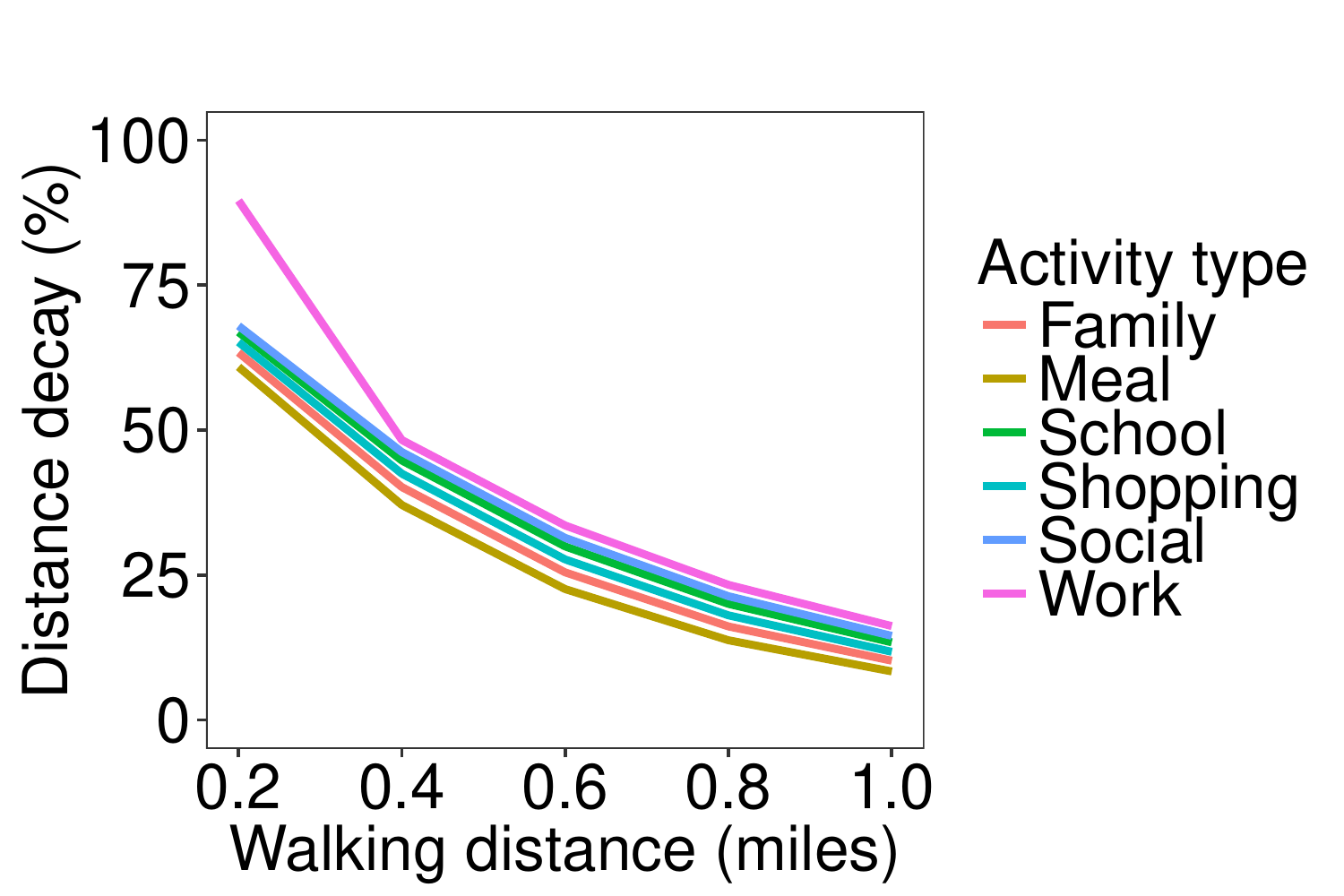}
	\captionsetup{justification=centering}
	\caption {Distance decay function for walking trips to different types of destination; Source: \cite{yang2012walking}.}
	\label{Walk}
\end{figure}
\begin{table}[!htbp]
\captionsetup{justification=centering}
	\caption{Estimated distance decay function parameters}
	\centering
	\begin{tabular}{| c | c | c |} 
		\hline
		Factor & Category & $\beta$ \\ 
		\hline
		\multirow{4}{*}{Season} & Winter & 1.88 \\
		& Spring & 1.68 \\ 
		& Summer  & 1.64 \\
		& Autumn  & 1.7 \\ 
		\hline
		\multirow{4}{*}{Region} & Northeast & 1.85 \\
		& Midwest & 1.65 \\ 
		& South & 1.76 \\
		& West & 1.65 \\ 
		\hline
		\multirow{3}{*}{Community} & Town and country & 1.68 \\
		& Suburban & 1.63 \\
		& Urban and second city & 1.78\\  
		\hline
	\end{tabular}
	\label{WalkingPreference}
\end{table}
  \subsubsection{EV market penetration}Various social, environmental and economic factors can significantly contribute to the increasing market share of different types of EVs \cite{faridimehr2018stochastic}. The research in \cite{vergis2015comparison} showed that the presence of charging infrastructure  contributes to the adoption of battery EVs but does not have any significant effect on adoption of Plug-in hybrid vehicles. The authors of \cite{liu2017uncertain} considered many sources of uncertainty in their sampling process, such as charging infrastructure availability, energy prices, and consumers' preferences. Their results project that BEV share distributions in 2030 and 2050 will have mean values of 11\% and 28\%, respectively.

\subsection{Utility Construction}
Discrete choice models are utilized to help decision makers select the best choice among different options in a choice set. These models are designed to maximize the utility of a decision maker's behaviors \cite{train2009discrete}. When an EV driver $j$ reaches a charging station, he/she can choose between $n$ different charging types that differ in terms of prices and charging {\sv duration}. A given choice among the $n$ charging types will provide an EV driver with a certain level of utility. We denote the utility that EV driver $j$ obtains from charging type $n$ as {\sv $U_{n,j}$}, $n = 1,\ldots,N$. The behavioral model will then choose charging type $n$ if and only if $U_{n,j} \geq U_{n^{'},j},\forall n,n^{'} \in N; n \neq n^{'}$. Thus, a ``utility function'' can be defined as 
$ U_{n,j} = V_{n,j} + \epsilon_{n,j}$, where  $V_{n,j} = V(X_{n,j})$ captures the deterministic part of the utility and $\epsilon_{n,j} $ is the random part capturing the non-observable variables. Some of the variables in $V$ are unknown to us, so we need to estimate them statistically. 

We consider $U_{n,j}$ to be the utility of an EV driver who is willing to charge at station $j$ using charging type $n$. Let $K$ be the set of predictor variables. Then the utility function can represented as
$$ U_{n,j} = \sum_{k \in K}\beta^{k}X^{k}_{n,j} + \epsilon^{k}_{n,j},$$
where $\beta^{k}$ are the coefficients of the corresponding variables representing the decision maker's taste.
The research in \cite{wen2016modeling} analyzed drivers' charging choices through a web-based preference survey, using a mixed logit model with various predictor variables. Table \ref{Beta} shows the estimated fixed and random effects of variables provided in \cite{wen2016modeling}.\\
\begin{table}\centering
\setlength\tabcolsep{3pt}
\captionsetup{justification=centering}
\caption{Estimated parameters using mixed logit model; Source:\cite{wen2016modeling}}
\begin {tabular}{ c  c  c }
\hline
\multicolumn{1}{c}{} & \multicolumn{1}{c}{Fixed Effects} & \multicolumn{1}{c}{Random Effects} \\  \cline{2-3}
\multicolumn{1}{c}{Variable} & \multicolumn{1}{c}{Estimate} & \multicolumn{1}{c}{Standard Deviation} \\
\hline
    Intercept & 4.756 & 0.022 \\ 
    Price  & - 0.607 & 0.089 \\
    Charging cost  & -0.062 & 0.004 \\ 
    Cost at home & 0.009 & 0.489 \\ 
    Dwell time $\geq$ 30 min & 0.335 & 0.188 \\ 
    {\sv Chargers} power (Reference: Level 1) & &  \\
    Level 2  & 1.229 & 0.253 \\ 
    Level 3 & 1.609 & 0.264 \\ 
    Ranged charged  & 0.014 & 0.003 \\
    Remaining range  & -0.130 & 0.006 \\ 
    Enough to Next Charging Opportunity  & -4.401 & 0.078 \\ 
\hline
\label{Beta}
\end{tabular}
\end{table}
Using the estimated parameters, we calculate EV drivers' utility from charging at each type of {\sv charger} and also the utility of not charging at that station. As mentioned earlier, when an EV driver arrives in the community to reach his/her final destination, a set of available parking lots is selected based on the driver's walking preferences. Then the driver's utility for each parking lot in the selected set is calculated. This process is repeated for each driver. Finally, we aggregate the utilities of the individuals to obtain the aggregated utility for each charger type in each parking lot.
\section{Notation and Model Formulation} \label{formula}
One of the common approaches to modeling a problem under uncertainty is two-stage stochastic programming. We formulate the EVCS network design problem as a scenario-based, two-stage non-linear stochastic programming model that considers the randomness arising from dwell times, drivers' willingness to walk, the EV market penetration, the demand patterns on weekdays and weekends, and SOCs. The first-stage decision variables represent ``here-and-now'' decisions that are determined based on deterministic parameters in the first-stage constraints before the uncertainty is revealed. Subsequently, second-stage decisions are determined based on the first-stage decisions and the realizations of the random variable. 

We define $J$ as the set of potential parking lots for installing a set of charger types, denoted as $N$. We define $B$ as the set of buildings that are considered to be the final destinations for EV drivers. Given $b \in B$, we define $ m \in S^M(b)$ to be collection of subsets of available parking lots within the walking preference ranges of drivers whose final destination is building $b$. We consider a collection of subsets since the EV drivers have commonality among the parking lots in reaching their final destinations. This is due to that the drivers have different walking distance preferences and hence have different parking lot subsets. We define $T$ to be the set of time slots within a day, indexed by $t \in T$. We use $\Gamma$ to denote a set of arrival and departure times, where $\gamma(a)$ and $\gamma(d)$ indicate a combination of arrival and departure times for $\gamma \in \Gamma$. We define $\tilde{\omega}$  to be a multi-variate random variable representing the demand, where each scenario $\omega$ is a realization of $\tilde{\omega}$. In the first-stage of the model, the locations and types of chargers are determined by binary variables, and the numbers of charger types in the selected parking lots are represented by integer variables. In the second-stage, based on EV drivers' walking preference ranges and the aggregated utilities for each parking lot and charger type, EV drivers are allocated to parking lots in a way that maximizes their expected access. For the mathematical formulation, we first define the model sets, parameters, and variables:
\begin{itemize}
	\item Sets
	\begin{itemize}
		\item $J$: Set of parking lots, with $j\in J$.
		\item $T$: Set of time slots, with $t \in T$.
		\item $N$: Set of charger types, with $n \in N$.
		\item $B$: Set of buildings, with $b\in B$.
		\item $S^M(b)$: Collections of subsets of possible parking lots based on the walking preferences of drivers who are going to building $b$. There are $M$ subsets and $M$ depends upon building $b$, with $m \in S^M(b)$.
		\item $\Gamma$: Set of arrival and departure times, with $\gamma \in \Gamma$.
		\item $\Omega$: Set of scenarios, with $\omega \in \Omega$.
	\end{itemize} 
	\item Model parameters
	\begin{itemize}
		\item $c_n$: Cost of installing {\sv charger} of type $n$.
		\item $k_{j}$: Capacity of parking lot $j$ for installing {\sv chargers}.
		\item $F$: Total amount of the budget for installing {\sv chargers}.
		\item $d_{\gamma,b}(\omega)$: Total demand for building $b$ between the arrival and departure times $\gamma \in \Gamma$ for a given $t \in T$ in scenario $\omega \in \Omega$.
		\item $u_{n,j}(\omega)$: The aggregated utility of EV drivers who are willing to use {\sv charger} type $n$ in parking lot $j$ in scenario $\omega \in \Omega$.	
		\item $u_{nc,j}(\omega)$: The aggregated utility of EV drivers who are not willing to charge their EVs in parking lot $j$ in scenario $\omega \in \Omega$ .
		\item $d^{'}_{\gamma,b,m}(\omega)$: The demand for building $b$ among drivers who are willing to use parking lots $m\in S^M(b)$ between the arrival and departure times $\gamma \in \Gamma$ in scenario $\omega \in \Omega $.
	\end{itemize}
	\item First-stage decision variables
	\begin{itemize}
		\item $x_{n,j}$: 1 if parking lot $j$ is chosen for installing {\sv charger} type $n$; 0 otherwise.
		\item $z_{n,j}$: Number of {\sv charger} of type $n$ in parking lot $j$.
	\end{itemize}
	\item Second-stage decision variables
	\begin{itemize}
		\item $y_{\gamma,b,j,n}^{m}(\omega)$: The proportion of the demand for building $b$ in the subsets of parking lots $S^M(b)$ between the arrival and departure times $\gamma \in \Gamma$ for a given $t \in T$ that is satisfied by parking lot $j \in S^m(b) $, where $m \in S^M(b)$, using {\sv charger} of type $n$ in scenario $\omega \in \Omega$.
	\end{itemize}
\end{itemize}
\subsection{ Two-stage Non-linear Stochastic Model}\label{model}
The two-stage non-linear stochastic programming model is defined as follows:
\begin{alignat}{3}
& \text{First-Stage Model:}  && \nonumber  \\
& \text{ Max } E_{\Omega}[\varphi(x,z,\tilde{\omega})]  && \label{fs-obj} \\ 
& \text{s.t. } && \nonumber \\
& \sum_{n \in N}{z_{n,j}}\leq k_{j} \hspace{1.6cm} \forall   j \in J, && \label{fs-obj-eq2} \\  
& z_{n,j} \leq k_{j}x_{n,j} \hspace{1.65cm} \forall   n \in N,j \in J,  &&   \label{fs-obj-eq3} \\ 
& \sum_{n \in N}\sum_{j \in J}{c_{n}z_{n,j}} \leq F  &&  \label{fs-obj-eq4} \\ 
& x_{n,j} \in \{0,1\},z_{n,j} \in \mathbb{Z^{+}} \hspace{1.5cm} \forall n\in N, j\in J. &&\label{fs-obj-eq5}
\end{alignat}
The second-stage recourse function based on the first-stage decisions $x$ and $z$ and a scenario $\omega$ is given by the following non-linear programming model:
\begin{alignat}{3}
\varphi(x,z,\omega) = &\text{ Max } 
\sum_{\gamma \in \Gamma } \sum_{b \in B} {\sum_{m \in S^M(b)}} \sum_{j \in S^m(b)}\sum_{n \in N}&& \nonumber\\
& {d_{\gamma,b}}(\omega) y_{\gamma,b,j,n}^{m}(\omega) && \label{Second-Stage Model}
\end{alignat}
\begin{alignat}{3}
& \text{s.t. } && \nonumber \\
& \sum_{\substack{\gamma \in \Gamma : \\ \gamma(a) \leq t \leq \gamma(d)} }\sum_{b \in B}\sum_{\substack{m \in S^M(b): \\ j \in S^m(b)}} {d_{\gamma,b}}(\omega)y_{\gamma,b,j,n}^{m}(\omega)\leq z_{n,j} && \nonumber \\
& \hspace{4cm} \forall t \in T,j \in J , n \in N, && \label{fs-obj-eq7}\\ 
& \sum_{\substack{m \in S^M(b): \\j\in S^m(b)}}y_{\gamma,b,j,n}^{m}(\omega)\leq  \frac{e^{u_{n,j}(\omega)}x_{n,j}}{e^{u_{nc,j}(\omega)}+\sum_{l \in N}e^{u_{l,j}(\omega)}x_{l,j}} && \nonumber \\
& \hspace{3cm} \forall \gamma \in \Gamma,b \in B,j \in J,n \in N, && \label{fs-obj-eq8}\\ 
& \sum_{n \in N}\sum_{m \in S^M(b)}\sum_{j \in S^m(b)}y_{\gamma,b,j,n}^{m}(\omega)\leq 1 \hspace{0.95cm} \forall \gamma \in \Gamma,b\in B, && \label{fs-obj-eq9} \\
&{d_{\gamma,b}}(\omega)\sum_{n \in N}\sum_{j \in S^m(b)}y_{\gamma,b,j,n}^{m}(\omega)\leq d^{'}_{\gamma,b,m}(\omega) && \nonumber \\ & \hspace{4cm} \forall \gamma \in \Gamma, b \in B, m \in S^M(b), && \label{fs-obj-eq10}\\ 
&  0 \leq y_{\gamma,b,j,n}^{m}(\omega)\leq 1 && \nonumber \\&\hspace{1cm} \forall\gamma \in \Gamma, b \in B, m \in S^M(b),  j \in S^m(b), n \in N. && \label{fs-obj-eq11}
\end{alignat}
The first-stage objective function \eqref{fs-obj} maximizes the expected EV drivers' access to the charging stations. Constraints (\ref{fs-obj-eq2}) represent capacity restrictions for each type of charger in a parking lot based on its capacity, and constraints (\ref{fs-obj-eq3}) state that a parking lot must be selected before selecting the charger type. Constraints (\ref{fs-obj-eq4}) give the budgetary constraints. Constraints (\ref{fs-obj-eq5}) define the binary and integer restrictions for the first-stage variables. For a realization of $\omega \in \Omega$, the second-stage objective function (\ref{Second-Stage Model}) maximizes the EV traffic flows based on the network decisions made in the first-stage. For each time slot in the planning horizon $t \in T$, the constraints (\ref{fs-obj-eq7}) limit access based on the capacity decided upon in the first-stage. Constraints (\ref{fs-obj-eq8}) limit EV drivers' choice of different levels of chargers based on the utility function estimated by the mixed logit model described in the previous section. Constraints (\ref{fs-obj-eq9}) ensure that the allocation of flow to the charging stations for each building does not exceed the building's demand in any time slot. Constraints (\ref{fs-obj-eq10}) guarantee that drivers are assigned to only one of the parking lots within their walking distance range. Finally, constraints (\ref{fs-obj-eq11}) define the restrictions for the second-stage variables. Due to the constraints \eqref{fs-obj-eq8}, the two-stage model is non-linear in nature and is in general difficult to solve. In the next section, we provide details for linearizing the model so that it is viable for computational efficiency. \\
\begin{proposition} \label{prop:yint-relax}
First, we restate constraints (\ref{fs-obj-eq8}) as:\\
$$\sum_{\substack{m \in S^M(b): \\j\in S^m(b)}}y_{\gamma,b,j,n}^{m}{({e^{u_{nc,a}}}+{\sum_{l \in N}}{e^{u_{l,j}}x_{l,j})\leq {e^{u_{n,a}}x_{n,j}}}}.$$
Then for bounded continuous and binary variables $y$ and $x$, respectively, we define a non-negative bi-linear variable as follows: 
\begin{alignat}{3}
& o_{\gamma,b,j,n , l}^{m} = x_{l,j} y_{\gamma,b,j,n}^{m} && \nonumber \\&\hspace{0.5cm} \forall \gamma \in \Gamma, n \in N, l \in N,b \in B,m \in S^M(b), j \in S^m(b) && \nonumber.
\end{alignat}
Using the variables $o$, a standard approach that has been adopted for linearizing the bi-linear terms is to replace each term by its convex and concave envelopes, also called the ``McCormick envelopes'' \cite{mccormick1976computability}. The constraints (\ref{fs-obj-eq8}) can then be rewritten as:\\
\begin{alignat}{3}
&e^{u_{nc,j}}\sum_{\substack{m \in S^M(b): \\j\in S^m(b)}}y_{\gamma,b,j,n}^{m} + \sum_{\substack{m \in S^M(b): \\j\in S^m(b)}}\sum_{l \in N}{e^{u_{l,j}}o_{\gamma,b,j,n,l}^{m}}  &&\nonumber\\
& \leq e^{u_{n,j}}x_{n,j} \hspace{1cm} \forall \gamma \in \Gamma,b \in B,j \in J,n \in N,
&& \label{fs-obj-eq99}
\end{alignat}
\begin{alignat}{3}
& o_{\gamma,b,j,n,l}^{m} \leq x_{n,j} &&\nonumber\\& \hspace{0.25cm} \forall \gamma \in \Gamma, n \in N, l \in N,b \in B,m \in S^M(b), j \in S^m(b),&& \label{fs-obj-eq100} \\
& o_{\gamma,b,j,n , l}^{m} \leq y_{\gamma,b,j,n}^{m}&&\nonumber\\&\hspace{0.25cm} \forall \gamma \in \Gamma, n \in N, l \in N,b \in B, m \in S^M(b), j \in S^m(b), && \label{fs-obj-eq101}\\
& o_{\gamma,b,j,n,l}^{m} \geq x_{n,j} + y_{\gamma,b,j,n}^{m} -1 &&\nonumber\\&\hspace{0.25cm} \forall \gamma \in \Gamma, n \in N, l \in N, b \in B, m \in S^M(b), j \in S^m(b). &&\label{fs-obj-eq102}
\end{alignat}
\end{proposition}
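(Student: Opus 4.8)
The plan is to show that the linear system \eqref{fs-obj-eq99}--\eqref{fs-obj-eq102}, together with the nonnegativity of $o$, is an \emph{exact} reformulation of the nonlinear constraint \eqref{fs-obj-eq8} rather than a mere relaxation, and that this exactness is driven entirely by the binary restriction on the $x$ variables. First I would justify the cleared-denominator restatement. For every feasible first-stage solution the denominator $e^{u_{nc,j}}+\sum_{l\in N}e^{u_{l,j}}x_{l,j}$ is a sum of strictly positive exponentials, hence strictly positive (indeed bounded below by $e^{u_{nc,j}}>0$ even when all $x_{l,j}=0$); multiplying both sides of \eqref{fs-obj-eq8} by this quantity preserves the inequality and yields the bilinear constraint stated in the proposition, whose only nonlinearities are the cross terms $x_{l,j}\,y_{\gamma,b,j,n}^{m}$.

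Next I would introduce, for each cross term, the auxiliary variable $o_{\gamma,b,j,n,l}^{m}$ intended to equal $x_{l,j}\,y_{\gamma,b,j,n}^{m}$, and expand the bilinear constraint into $e^{u_{nc,j}}\sum_{m}y_{\gamma,b,j,n}^{m}+\sum_{m}\sum_{l\in N}e^{u_{l,j}}o_{\gamma,b,j,n,l}^{m}\le e^{u_{n,j}}x_{n,j}$, which is exactly \eqref{fs-obj-eq99}. The whole argument then reduces to proving that the identity $o_{\gamma,b,j,n,l}^{m}=x_{l,j}\,y_{\gamma,b,j,n}^{m}$ is \emph{forced} by the McCormick inequalities \eqref{fs-obj-eq100}--\eqref{fs-obj-eq102} together with $o\ge 0$, so that the substitution above is valid rather than approximate.

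The core step is a two-case argument exploiting $x_{l,j}\in\{0,1\}$ and $0\le y_{\gamma,b,j,n}^{m}\le 1$. If $x_{l,j}=0$, the upper bound \eqref{fs-obj-eq100} forces $o\le 0$ while nonnegativity gives $o\ge 0$, so $o=0=x_{l,j}y$. If $x_{l,j}=1$, the bound \eqref{fs-obj-eq101} gives $o\le y$ and the bound \eqref{fs-obj-eq102} gives $o\ge x_{l,j}+y-1=y$, so $o=y=x_{l,j}y$. In both cases $o$ collapses to the intended product, i.e. the McCormick envelope pins $o$ to a single value rather than spanning a proper interval. Substituting the identity back shows \eqref{fs-obj-eq99}--\eqref{fs-obj-eq102} and \eqref{fs-obj-eq8} have the same projection onto the $(x,y)$ space, which is the claimed equivalence.

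I would close by explaining why this works and where the delicate point lies. For a generic product of two bounded \emph{continuous} variables the McCormick inequalities yield only the tightest convex relaxation, and the four bounds need not be active simultaneously; the substantive content here is recognizing that integrality of the factor $x_{l,j}$ is precisely what makes the envelope exact, since fixing the binary factor at either endpoint of $\{0,1\}$ forces the upper and lower envelopes to coincide and leaves no room for a spurious fractional $o$. The main obstacle is therefore conceptual rather than computational: verifying the binary--continuous exactness and, relatedly, maintaining indexing consistency between the product definition (which uses the factor $x_{l,j}$) and the corresponding upper and lower bounds, so that each $l\in N$ contributes its own auxiliary variable together with its own pair of McCormick inequalities.
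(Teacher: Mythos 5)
Your proposal is correct, and it actually does more work than the paper, whose entire proof is a pointer to \cite{mccormick1976computability}. The substance is the same — clear the (strictly positive) denominator, introduce the products $o_{\gamma,b,j,n,l}^{m}=x_{l,j}\,y_{\gamma,b,j,n}^{m}$, and replace them by their McCormick envelopes — but the citation by itself only establishes that the four inequalities are the convex/concave envelopes of the bilinear term, i.e.\ the tightest \emph{relaxation}; what the reformulation actually needs, and what you supply with the two-case argument on $x_{l,j}\in\{0,1\}$, is that the envelope is \emph{exact} when one factor is binary, so that \eqref{fs-obj-eq99}--\eqref{fs-obj-eq102} together with $o\geq 0$ have the same projection onto $(x,y)$ as \eqref{fs-obj-eq8}. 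That is the right and standard way to close the gap the paper leaves open. One caveat you correctly sense but should state outright: your case analysis uses $o\leq x_{l,j}$ and $o\geq x_{l,j}+y-1$, whereas the proposition as printed writes $x_{n,j}$ in \eqref{fs-obj-eq100} and \eqref{fs-obj-eq102} (and the restated constraint carries stray subscripts $a$); your argument proves the version with $x_{l,j}$, which is the one consistent with the definition of $o$ as the product with factor $x_{l,j}$, so you are proving the (evidently intended) corrected statement rather than the literal one.
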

\begin{proof}
For the proof, see \cite{mccormick1976computability}.
\end{proof}

This reformulation helps represent the second-stage problem as a linear programming model, thus allowing us to use the L-shaped method as a decomposition algorithm. {\sv It is worth to mention that since the two-stage model is an extension of capacitated facility location problem, it is a NP-hard problem \cite{melkote2001capacitated}.}
\section{Methodology and Algorithm Development}\label{Method}
\subsection{Sample Average Approximation}
The SAA method is an approach to solving two-stage stochastic programming problems that uses Monte Carlo simulation. It is a sampling technique for approximating the expectation function in a two-stage model. SAA approximates the second-stage expected recourse function of the two-stage stochastic programming model by a sample average estimate derived from a random sample. Then the sample average approximating the two-stage model is solved using a decomposition algorithm or a direct solver. The SAA model is solved multiple times with different samples to obtain candidate solutions along with statistical estimates of their optimality gaps. The SAA procedure is specified in Algorithm \ref{SAA_procedure}. 
 \begin{algorithm}[!htbp]
	\caption{: SAA}
	\begin{algorithmic}
    	\State \textbf{Estimate the upper bound:}
    	\State \hskip1.5em Generate $K$ independent sample sets of scenarios, each
    	\State \hskip1.8em  of size $L$, i.e., \:($\omega^1_j,\omega^2_j,...,\omega^L_j$) for $j = 1,2,...,K$.
    	\State \hskip1.5em For each sample set $j = 1,2,...,K$, find the optimal
    	\State \hskip1.8em solution:
    	\State \hskip1.5em $$v^j_{L} = \frac{1}{L} \sum_{i=1}^{L} \varphi(x,z,\omega^i_j).$$
    	\State \hskip1.5em Calculate: $$\overline v_{L,K} = \frac{1}{K} \sum_{j=1}^{K} v^j_{L},$$
    	\State \hskip1.5em $$\sigma^2_{\overline v_{L,K}} = \frac{1}{K(K-1)} \sum_{j=1}^{K} (v^j_{L} - \overline v_{L,K})^2.$$
    	\State \textbf{Estimate the lower bound:}
    	\State \hskip1.5em Choose any feasible solution ($\overline x,\overline z$) from the first-stage
    	\State \hskip1.8em problem, which provides a lower bound for the optimal
    	\State \hskip1.8em value $f(\overline x,\overline z) \leq v^*$. 
    	\State \hskip1.5em Choose a sample of scenarios of a size $L'$ that is much
    	\State \hskip1.8em larger than $L$ and independent of the samples, i.e.,
    	\State \hskip1.8em ($\omega^1,\omega^2,...,\omega^{L'}$).
    	\State \hskip1.5em Estimate the objective function $f$:
    	\State \hskip1.5em
    	\State \hskip5em $f(\overline x, \overline z) = \frac{1}{L'} \sum_{i=1}^{L'} \varphi(x,z,\omega^i).$
    	\State \hskip1.5em
    	\State \hskip1.5em Calculate the variance of this estimation:
    	\State 
    	\State \hskip1.5em $\sigma^2_{L'}(\overline x,\overline z) = \frac{1}{L'(L'-1)} \sum_{i=1}^{L'} (\varphi(x,z,\omega^i)
		- f(\overline x, \overline z))^2.$
		\State 
		\State \textbf{Estimate the optimality gap and variance:}
		\State Based on the computed upper and lower bounds, the optimality gap is estimated as follows:
		\State $$Gap_{K,L,L'}(\overline x,\overline z) = \overline v_{L,K} - f(\overline x,\overline z).$$
		\State Similarly, the variance is calculated as follows:
		\State $$\sigma^2_{gap} = \sigma^2_{\overline v_{L,K}} + \sigma^2_{L'}(\overline x,\overline z).$$
    \end{algorithmic}
    \label{SAA_procedure}
\end{algorithm}

The SAA procedure for statistical evaluation of a candidate solution was suggested in \cite{mak1999monte}, while convergence properties for the SAA method were studied in \cite{kleywegt2002sample}.

\begin{table}[!htbp]
    \centering
	\caption{SAA performance}
	\begin{tabular}{|c|c|c c  c c|} 
		\hline
		$S$ & $P$ & UB & LB & Gap & SD \\
		\hline
		\multirow{4}{*}{10} & 5 & 230.78 & 224.40 & 6.38 & 5.57\\
		&10 & 246.89 & 242.83 & 4.06 & 4.98 \\
		& 15 & 277.43 & 272.72 & 4.71 & 3.55 \\
		& 20 & 300.76 & 295.42 & 5.34 & 3.76\\
		\hline
		\multirow{4}{*}{20} & 5 & 227.90
		& 224.90 & 3.00 & 6.11 \\
		& 10 & 268.78 & 264.95 & 3.83 & 5.20 \\
		& 15 & 296.59 & 291.80 & 4.79 & 3.65 \\
		& 20 & 310.49 & 306.78 & 3.71 & 4.76 \\
		\hline
		\multirow{4}{*}{30} & 5 & 229.75
		& 226.23 & 3.52 & 2.03\\
		&10& 265.20 & 261.82 & 3.38 & 2.98 \\
		& 15 & 286.43 & 285.11 & 1.32& 3.81\\
		& 20 & 273.18 & 272.28 & 0.90 & 2.48\\
		\hline
		\multirow{4}{*}{40} & 5 & 227.29
		& 226.40 & 0.89 & 2.75\\
		&10 & 278.13 & 277.00 & 1.13 & 2.21 \\
		& 15 & 304.46 & 302.31 & 2.15 & 1.17\\
		& 20 & 323.39 & 322.85 & 0.54 & 1.88\\
		\hline
		\multirow{4}{*}{50} & 5 & 220.10& 219.70 & 0.40 & 2.90\\
		&10 & 289.42 & 288.21 & 1.21 & 3.12 \\
		& 15 & 308.24 & 307.90 & 0.34 & 2.26\\
		& 20 & 322.00 & 321.85 & 0.15 & 2.59\\
		\hline
	\end{tabular}
	\label{SAA}
\end{table}

Table \ref{SAA} presents the computational results for the two-stage model using the SAA procedure.  In the table, `S' and `P' represent the numbers of scenarios and parking lots, respectively. The upper and lower bounds are represented as `LB' and `UB', respectively. The upper bound for the expected accessibility of the charging station is estimated by a batch size of 20 ($K$=20). An independent sample of scenarios ($L^{'}$=1,000) were used to estimate a lower bound for the optimal solution. The gap and standard deviation are represented in the columns `Gap' and `SD', respectively.
\subsection{L-shaped Decomposition}
SAA was adopted for the model presented in section \ref{model}, and two-stage sample average stochastic programs are commonly solved by decomposition algorithms such as Benders' method and the L-shaped method. Realistic problems are continuously growing in size and complexity; for this reason, decomposition techniques are more attractive. Decomposition methods break a problem down into smaller problems that are easier to solve. The L-shaped method has been applied to the class of mixed-integer linear stochastic programming problems with only continuous variables in the second-stage. The L-shaped method works by approximating the expected second-stage recourse function through construction of optimality cuts in the first-stage based on the dual solutions of the second-stage problems. The procedure alternates between a master problem (MP), as represented in \eqref{MP-FSl}, and sub-problems (SPs), trading information to obtain the optimal solution. The SPs are the second-stage formulation \eqref{Second-Stage Model}, subject to constraints \eqref{fs-obj-eq7}-\eqref{fs-obj-eq11}. 

\begin{alignat}{3}
& \text{Master Problem (MP):}  && \nonumber  \label{MP-FSl}\\
& \text{ Max } \eta  && \\ 
& \text{s.t. } && \nonumber \\
& \eqref{fs-obj-eq2} - \eqref{fs-obj-eq5}, \, \, \,\eta \,\,\,\, \text{free}.\nonumber  
\end{alignat}

In the problem, constraints (\ref{fs-obj-eq7}), (\ref{fs-obj-eq99}), (\ref{fs-obj-eq100}), and (\ref{fs-obj-eq102}) are referred to as the linking constraints because of the presence of the first-stage variables $x$ and $z$ in the second-stage, which links the two stages. Let \textbf{$A$} be the coefficient matrix for variables $z_{n,j}$ in the linking constraints (\ref{fs-obj-eq7}), where $a_{i,q}$ is the entry of matrix \textbf{$A$} at indices $i$ and $q$, and $a_{i,q} \in R^{|N||J||T|\times|N||J|}$. In addition, let $F$ and $G$ be the coefficient matrices for variables $x_{n,j}$ in the linking constraints (\ref{fs-obj-eq99}) and ((\ref{fs-obj-eq100}), (\ref{fs-obj-eq102})), respectively, where $f_{i,q}$ and $g_{i,q}$ are the entries of the matrices \textbf{$F$} and \textbf{$G$} at indices $i$ and $q$,  and $f_{i,q} \in R^{|N||J||T||B|\times|N||J|}$, $g_{i,q} \in R^{2|T||N||N||B||J||S|\times|N||J|}$. We use $\pi_{\omega}$ as the notation for a vector for the dual values of the second-stage constraints ((\ref{fs-obj-eq7}), (\ref{fs-obj-eq9}), (\ref{fs-obj-eq10}), (\ref{fs-obj-eq11}) (\ref{fs-obj-eq99}), (\ref{fs-obj-eq100}), (\ref{fs-obj-eq101}), (\ref{fs-obj-eq102})), and $\pi_{\omega}^{1}$, $\pi_{\omega}^2$, and $\pi_{\omega}^{3}$ as dual values corresponding to the constraints (\ref{fs-obj-eq7}), (\ref{fs-obj-eq99}) and ((\ref{fs-obj-eq100}), (\ref{fs-obj-eq102})) in each scenario. We use $\Delta_{\omega}$ to refer to the right-hand sides of the SPs in each scenario. The L-shaped method is initialized by solving the first-stage EVCS problem to obtain the initial solutions $x^{0}$ and $z^{0}$. These solutions are then used as fixed parameters in the second-stage problem. For each scenario $\omega \in \Omega$, a sub-problem is defined based on the second-stage problem. In the next step, the SPs are solved to obtain the dual values and the corresponding objective functions. The optimality cut(s) is (are)  then generated using matrix multiplication. For each scenario, an optimality cut can be defined as follows:
\begin{alignat}{3}
&\sum_{n \in N}\sum_{j \in J}\Big(((\pi^1)^{T}A)\cdot z_{n,j} + \big( ((\pi^2)^{T}F) ((\pi^3)^{T}G)\big)\cdot x_{n,j} \Big)  &&\nonumber \\ 
& +\eta \leq \pi^{T}\cdot \Delta &&\nonumber,\label{lsh}
\end{alignat}
 where $\eta$ is a free variable. We use $\Theta_{\omega}^{k}$ to denote the optimality cut corresponding to scenario $\omega$ at iteration $k$. It should be noted that because the second-stage is feasible for every solution of the first-stage (complete recourse), we do not need to add any feasibility cuts in \eqref{MP-FSl}. In the next step, the generated cut(s) are added to the MP \eqref{MP-FSl} with the objective function to maximize $\eta$ for the single-cut and $\sum_{{\omega \in \Omega}} p_\omega \eta_{\omega}$ for the multi-cut L-shaped decomposition where $p_\omega$ is the probability of occurrence for each scenario $\omega$. Then the MP is solved to obtain a new solution for the variables $x$ and $z$, and the updated solution is then added to the sub-problems. In each iteration, upper and lower bounds are updated based on the new solutions obtained from the sub-problems and the MP. This process is repeated until the difference between the upper and lower bounds reaches a pre-determined threshold. 
\newline
\indent To evaluate the efficacy of the L-shaped algorithm, we conducted computational experiments with various instances. We implemented single- and multi-cut L-shaped decomposition methods to solve the large-scale sample average two-stage stochastic programming models. We compared the performance of these two methods to the deterministic equivalent problem (DEP). The DEP is the entire representation of formulation \eqref{fs-obj}-\eqref{fs-obj-eq11} without any decomposition for the problem. Table \ref{Decomposition_1} indicates the complexity of instances in terms of the number of variables and constraints in the first-stage and the second-stage, along with the number of non-zeros. The columns labelled `$S$', `$P$', `Cons', and `Vars' represent the number of scenarios, parking lots, constraints, and variables, respectively.
\begin{algorithm}[!htbp]
	\caption{: L-shaped decomposition}
	\begin{algorithmic}
    	\State \textbf{Initialization:}
    	\State \hskip1.5em Obtain an initial solution $z^{0}$ and $x^{0}$  by solving the first-
    	\State \hskip1.8em stage problem.
    	\State \hskip1.5em Set UB $\gets$ $\infty$, LB $\gets$ $-\infty$, $k \gets 0$.
    	\State \hskip1.5em Define the free variable $\eta$ for single-cut  and  $\eta_\omega$ for \State \hskip1.8em multi-cut decomposition.
    	\State \textbf{While} UB - LB $> \epsilon$:
    	\State \hskip2em\textbf{Sub-problems:}
    	\State \hskip3em For $\forall \omega \in \Omega$:
    	\State \hskip5em Solve $\varphi(x,z,\omega)$.
    	\State \hskip5em Calculate the dual solution for $\varphi(x,z,\omega)$ 
    	\State \hskip5.3em and store it as $\pi^{k}$.
    	\State \hskip5em Extract $\pi^{1,k}$, $\pi^{2,k}$, and $\pi^{3,k}$ from $\pi^{k}$. 
    	\State \hskip5em Calculate the objective function value for 
    	\State \hskip5.3em $\varphi(x,z,\omega)$ and store it as $f^{k}_{\omega}$.
    	\State \hskip2em \textbf{Update upper bound:}
    	\State \hskip3.5em Set $v^{k} = \sum_{\omega \in \Omega}p_{\omega}f^{k}_{\omega}$, where $p_{\omega}$ is the probability
    	\State \hskip3.8em of the occurrence of scenario $\omega \in \Omega$.
        \State \hskip3.5em Set  UB $\gets$ min (UB, $v^{k}$).
    	\State \hskip2em \textbf{Cut generation:}
    	\State \hskip3.5em \textbf{Single-cut:}
    	\State \hskip5em Add $\sum_{\omega \in \Omega} p_{\omega} \Theta^k_{\omega}$ to the first-stage problem.
    	\State \hskip3.5em \textbf{Multi-cut:}
    	\State \hskip 5em $\forall \omega \in \Omega:$
    	\State \hskip 7em Add $\Theta^k_{\omega}$ to the first-stage problem.
    	\State \hskip2em \textbf{Master problem:}
		\State \hskip3em Set $v^{k+1} \gets \eta$ as the objective function for
		\State \hskip3.3em single-cut.
		\State \hskip3em Set  $v^{k+1} \gets \sum_{{\omega \in \Omega}}\eta_{\omega}$ as the objective function for
		\State \hskip3.3em multi-cut.
    	\State \hskip3em  Solve the MP and update $z^{*} \gets z^k$ and $x^{*} \gets x^k$.
    	\State \hskip2em \textbf{Update lower bound:}
    	\State \hskip3.5em Set  LB $\gets$ max (LB,$v^{k+1}$). 
    	\State \hskip2em Set $k\gets k+1$.
    \end{algorithmic}
\end{algorithm}
 \begin{table*}[!htbp] 
	\caption{Model data specifications}
	\centering
	{\begin{tabular}{|c|c| c c c c c c c|} 
		\hline
		$S$ & $P$ & Cons & Vars & First-Stage Vars &First-Stage Cons & Second-Stage Vars &Second-Stage Cons & \# of Non-zeros \\
		\hline
		\multirow{4}{*}{10} & 5 & 477,970
 & 178,770
 & 15 & 21 & 447,955 & 178,749 & 2,886,750  \\
		& 10 & 951,240 & 357,540 & 30 & 41 & 951,210 & 357,499& 7,987,222   \\
		& 15& 1,424,510 & 536,310 & 45 & 61 & 1,424,465 & 536,249& 12,342,786 \\
		& 20 & 1,897,780 & 1,715,080 & 60 & 81 & 1,897,720 & 714,999& 18,967,552  \\
		\hline
		\multirow{4}{*}{20} & 5 & 1,043,334
 & 390,210 & 15 & 21 & 1,043,319 & 390,189 &  4,245,768 \\
		& 10 & 2,076,404
 & 780,360 & 30 & 41 & 2,076,374 & 780,319& 11,879,054 \\
		& 15& 3,109,474 & 1,170,630 & 45 & 61 & 3,109,429 & 1,170,569 & 17,652,320  \\
		& 20& 4,142,544 & 1,560,840 & 60 & 81 & 4,142,484  &  1,560,759 & 25,657,932\\
		\hline
		\multirow{4}{*}{25} & 5 & 1,244,946

 & 465,090 & 15 & 21 &1,244,931 & 465,069 & 6,676,510\\
		& 10 & 2,477,576 & 930,180
 &30&41 & 2,477,546 & 930,139 & 15,777,890   \\
		& 15& 3,710,206 & 1,395,270 &45 & 61 & 3,710,161 & 1,395,209  & 21,876,112 \\
		& 20& 4,942,836 & 1,860,360 & 60 & 81 & 4,942,776 & 1,860,279 & 33,132,981 \\
		\hline
	\multirow{4}{*}{30} & 5 & 1,456,852
 & 543,450 & 15 & 21 & 1,456,837 & 543,429 & 8,352,947  \\
		& 10 & 2,899,182 & 1,086,900
 & 30 &41 &2,899,152 & 1,086,859 & 20,301,290  \\
		& 15& 4341,512 &1,630,350 & 45 & 61 & 4341,467 &1,630,289 & 37,392,389 \\
		& 20& 5,783,842 & 2,173,800 & 60 & 81  &5,783,782 & 2,173,719 & 75,390,221 \\
		\hline
		\multirow{4}{*}{35} & 5 & 1,750,951
 & 653,910 & 15 & 21 & 1,750,936 & 653,889&  10,893,269\\
		& 10 & 3,484,551 & 1,307,820
 & 30 & 41 & 3,484,521 & 1,307,779 & 21,290,765 \\
		& 15& 5,218,151 & 1,961,730
 & 45 & 61 &  5,218,106 & 1,961,669  & 45,888,242 \\
		& 20& 6,951,751 & 2,615,640 & 60 & 81 &6,951,691 & 2,615,559  &  80,561,107 \\
		\hline
			\multirow{4}{*}{40} & 5 & 2,093,635
 & 781,590 & 15 & 21 & 2,093,620 & 781,569 & 15,896,110 \\
		& 10 & 4,166,475
 & 1,563,180 & 30 & 41 &4,166,445 & 1,563,139  &  30,290,137  \\
		& 15& 6,239,315 & 2,344,770 & 45 & 61 & 6,239,270 & 2,344,709 & 59,876,208 \\
		& 20& 8,312,155
 & 3,126,360 & 60 & 81  & 8,312,095 & 3,126,279 & 101,965,108  \\
		\hline
	\end{tabular}}
	\label{Decomposition_1}
\end{table*} 
 \begin{table*}[!htbp] 
	\caption{Computational results for L-shaped method}
	\centering
	\begin{tabular}{|c|c| c c| c c c| c c c|} 
		\hline
		\multirow{2}{*}{$S$} & \multirow{2}{*}{$P$}& \multicolumn{2}{c|}{DEP} & \multicolumn{3}{c|}{Single-cut}& \multicolumn{3}{c|}{Multi-cut}
		\\\cline{3-4} \cline{5-7} \cline{8-10}
		& & {\sv time(s)} & {\sv gap(\%)} & {\sv time(s)} & {\sv gap(\%)} & \# of cuts & {\sv time(s)} & {\sv gap(\%)} & \# of cuts \\
		\hline
		\multirow{4}{*}{10} & 5 & 303
 & 0.00
 & 1,234& 0.00 & 453 & 204 & 0.00& 570 \\
		& 10 & 2,263 & 0.00 & 3,600 & 0.30 & 598 & 539 & 0.00 & 780 \\
		& 15& 2,779 & 0.00 & 3,600& 0.20 &438 & 321 & 0.00& 510 \\
		& 20& 3,600 & 0.10 & 3,600 & 3.00 &253& 3,600 & 2.00 & 1,750 \\
		\hline
		\multirow{4}{*}{20} & 5 & 1,845
 & 0.00& 2,747 & 0.00 & 349 & 432 & 0.00 & 1,060 \\
		& 10 & 3200
 & 0.00 & 3,600 & 0.70 & 366 & 1,036 & 0.00 & 1,420 \\
		& 15& 3,600 & 10.80 & 3,600 & 3.90 & 272 & 737 & 0.00 & 1,180 \\
		& 20& 3,600 & 0.20 & 3,600 & 33.00 & 162 &  3,600 & 0.70& 1,360\\
		\hline
		\multirow{4}{*}{25} & 5 & 2,575

 & 0& 3,600 & 0.04& 508 & 921 & 0.00 & 1,675 \\
		& 10 & 3,600 & - & 3,600 & 4.10 & 301 & 3,600 & 0.20&3,300 \\
		& 15& 3,600 & - & 3,600 & 5.20 & 251 & 3,600 & 0.30& 3,125 \\
		& 20& 3,600 & 0.20 & 3,600 & 12.80 & 153 & 3,600 & 0.50&2,775\\
		\hline
	\multirow{4}{*}{30} & 5 & 3,057 & 0.00 & 3,600 & 0.00 & 467 & 661 & 0.00& 1,530\\
		& 10 & 3,600 & - & 3,600 & 62.40 & 499 & 1,086 & 0.00 & 1,710\\
		& 15& 3,600 & - & 3,600 & 5.80& 201 & 1,198 & 0.00& 1,860\\
		& 20 & 3,600 & 2.40 & 3,600 & 46.25 & 116 & 3,600 & 6.00& 2,130\\
		\hline
		\multirow{4}{*}{35} & 5 & 3,200 & 0.00 & 3,600 & 0.20 & 207& 1,401 & 0.00 & 1,610\\
		& 10 & 3,600 & - & 3,600 & 11.60 & 177 & 3,600 & 0.02& 2,380 \\
		& 15& 3,600 & - & 3,600 & 5.10 & 152 & 3,600 & 0.01& 2,205\\
		& 20& 3,600 & - & 3,600 & 11.80 & 98 & 3,600 & 0.15&1,575\\
		\hline
			\multirow{4}{*}{40} & 5 & 3,600 & - & 3,600 & 0.15 & 203 & 3,600 & 0.11 & 2,240\\
		& 10 & 3,600 & - & 3,600 & 2.40 & 148 & 3,600 & 0.22& 1,680 \\
		& 15 & 3,600 & - & 3,600 & 9.00 & 137 & 1,414 & 0.00 & 840\\
		& 20& 3,600 & - & 3,600 & 37.00& 76 & 3,600 & 4.00 & 400\\
		\hline
	\end{tabular}
	
	\label{Decomposition_2}
\end{table*}
Table \ref{Decomposition_2} shows the numerical results. {\sv The first two columns indicate the performance of the DEP in terms of runtime in seconds and the MIP gap ($\%$). The next three columns specify the performance of single-cut L-shaped decomposition, with `time(s),' `gap($\%$)' and `$\#$ of cuts' indicating runtime (seconds), the gap percentage, and the total number of cuts within the stipulated time limit, respectively. Similarly the last three columns indicate the performance metrics related to multi-cut L-shaped}. The gap percentage is calculated as the difference between the upper bound and the lower bound divided by the lower bound. Similarly, the last three columns specify the performance of multi-cut L-shaped decomposition. All of the optimization models were implemented in Python 3.6 using Gurobi 8.1.1, with a one-hour time limit. The computational experiments were performed on a computer with an Intel \textregistered \, Xeon \textregistered \, CPU E5-2640, 2.60 GHz, and 80GB RAM. As shown in Table \ref{Decomposition_2}, the runtime for most instances increased with an increase in the number of parking lots and  scenarios, as expected. When the number of scenarios was less than 20, DEP performed better than the single-cut method. However, for the rest of the 16 instances, DEP could obtain a feasible solution within the one-hour time limit in only four instances, while single-cut decomposition performed better in most of these instances. Especially for the large-scale instances, multi-cut decomposition outperformed DEP with a much better runtime and gap. Also, multi-cut decomposition outperformed single-cut decomposition in all instances. Given the relatively simple and fewer constraints in the first-stage model, the multi-cut variant was able to perform better than single-cut. The single-cut L-shaped method mostly had difficulties in accelerating the convergence of the upper and lower bounds. Hence, for any given data set, multi-cut decomposition outperformed the other methods, and especially when there were a larger number of parking lots and scenarios.
 \subsection{Value of the Stochastic Solution}
The utility of the stochastic programming approach can be evaluated by estimating the value of the stochastic solution (VSS) introduced by \cite{birge1982value}. The objective value of the recourse problem (RP) can be stated as RP=$E_{\Omega}[\varphi(x,z,\tilde{\omega})]$; then we take the expected value of the random variable and solve the \textit{expected value problem}, EV=$ \varphi(x,z,\bar{\omega})$, where $\bar{\omega}$ for the demand parameter is  $\sum_{\omega \in \Omega} p_{\omega} d_{\gamma,b}(\omega)$, with $p_{\omega}$ indicating a scenario $\omega$'s probability of occurrence and $\sum_{\omega \in \Omega}p_{\omega}$=1. Considering $\bar{x},\bar{z}$ as the solutions for the EV problem, the expected result of using the expected value solutions $(\bar{x},\bar{z})$ is EEV=$E_{\Omega}[\varphi(\bar{x},\bar{z},\tilde{\omega})]$. Then the VSS can be defined as the difference between the objective values of the recourse problem and the EEV, i.e, VSS=RP-EEV. {\sv In Fig. \ref{VSS}, value of the stochastic solution is calculated as $\frac{RP-EEV}{EEV}*100$, `VSS' represents the series while considering uncertainties in all the parameters, and each of the other series represent the value of stochastic solution for each uncertain parameter while other parameters are replaced by their mean values. Five replications and 40 scenarios were used to obtain value of stochastic solutions. Within the parameters, dwell time has the highest impact on the accessibility to charging stations. Due to the limited capacity of charging locations, and an EV is plugged-in till the end of a driver's activity, dwell time significantly affects the accessibility to the charging stations. For the same reason, SOC's impact is minimum and contributes to a driver's decision on whether to charge or not. At lower budgets, arrival time of an EV to the community has more impact due to lesser availability of charging stations. Also, as the budget increases, due to the availability of more charging stations within the drivers' walking distance, the stochastic influence of walking has decreased. By adopting the stochastic programming approach, the overall improvement in accessibility to charging stations is 11.37 \%.}
\begin{figure}[!htbp]
	\centering
	\includegraphics[scale = 0.28]{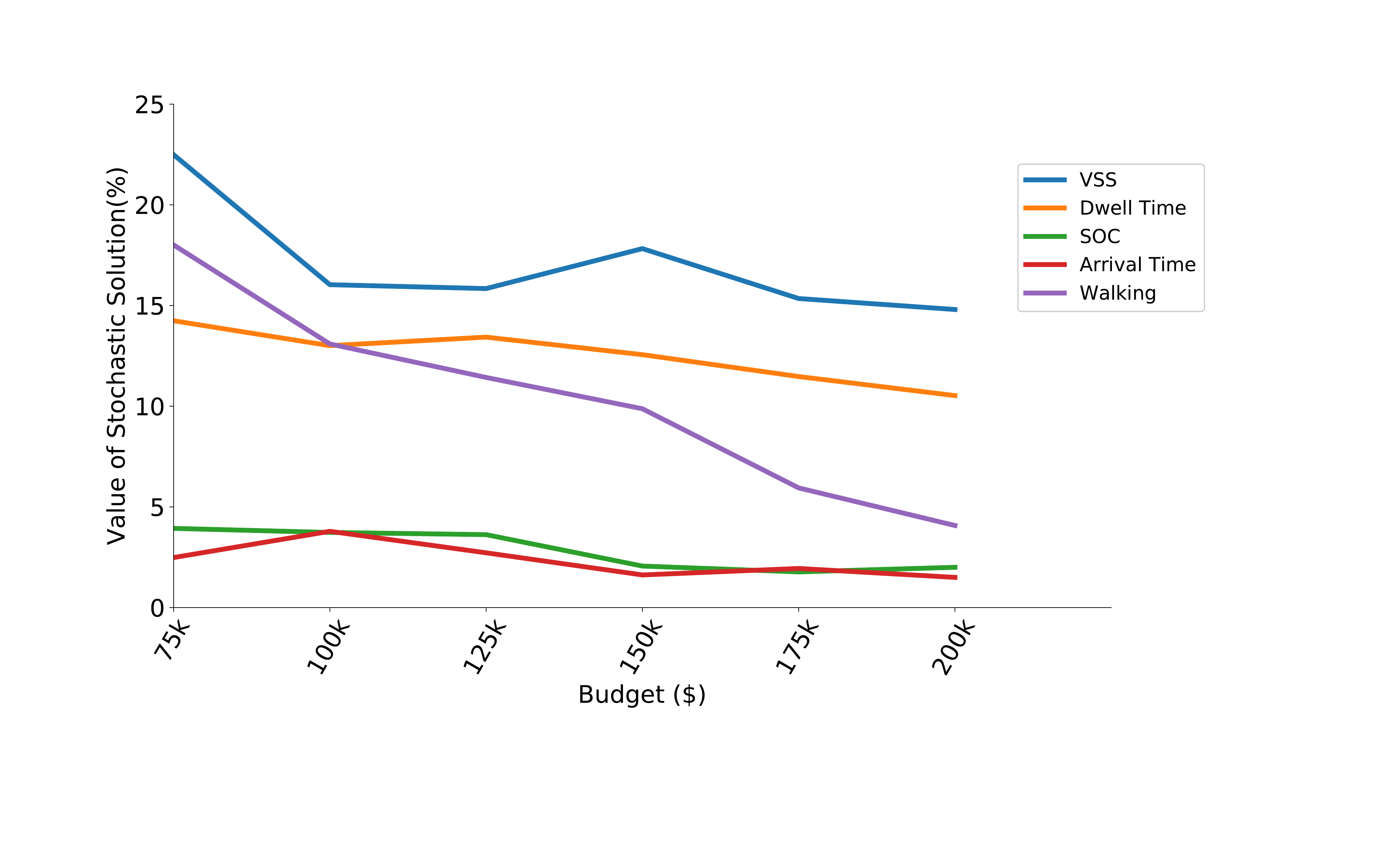}
	\vspace{-13mm}
	\captionsetup{justification=centering}
	\caption {valuating the effect of each source of uncertainty and comparing with the value of stochastic solution }
	\label{VSS}
\end{figure}
\section{Case Study and Computational Experiments}\label{Case}
We conducted a case study using data obtained from the Southeast Michigan Council of Governments (SEMCOG) and literature sources for the midtown area of Detroit, Michigan, in the US. This area includes different types of destinations, which attracts lots of traffic. There are 67 offices, 44 school-related buildings, 12 social places, 5 family-related buildings, 4 restaurants, and 3 shopping places in this area. We selected 10 parking lots as potential locations for installing {\sv chargers} and assumed that the parking lots are available for {\sv use} from 6:00 A.M to 6:00 P.M and that the capacity of each of the parking lots is based on its size. We estimated the EV demand for the case study through a two-step process. The data from SEMCOG shows that the average annual daily traffic for the Detroit midtown area is between approximately 10,000 and 14,000 vehicles and follows a uniform probability distribution. Furthermore, we calculated the EV demand for the final destination based on drivers' different activity types during the time of day and the day of the week. According to the U.S. Environmental Protection Agency's analysis, 3$\%$ and 5$\%$ of the light-duty vehicle fleet comprise EVs, and BEVs' market share can be affected by cold-temperature weather conditions \cite{pan2010locating}. Since our case study is in a cold area, we considered a 2$\%$  market share for BEVs in each case. Following a suggestion in \cite{yang2012walking}, we used a negative exponential distribution function to capture EV drivers' willingness-to-walk patterns based on the activity type, season, and community size. On average, given our parameter settings, 13$\%$  of the total demand is lost because there is no parking available within drivers' preferred walking distances.
\subsection{Scenario Generation}
We modeled uncertainties using case scenarios in the two-stage model. Each scenario represents a single day and is affected by the total number of EV drivers arriving in the community on a weekday or weekend in specific seasons of the year. Following the uniform probability distribution, each scenario occurs in each season of the year with the same probability. The arrival times of BEV drivers were estimated by Weibull distributions with parameters (8, 3) and (13, 4) for a weekend and a weekday, respectively \cite{zhong2008studying}. Based on a driver's activity, the dwell time was calculated using a Weibull distribution. The scale and shape parameters for each type of activity and type of day are provided in Table \ref{Params}. When a driver arrives in the community, a building or final destination is randomly assigned to the driver based on his/her activity type,  using a uniform distribution. As mentioned in the previous section, we use a truncated normal distribution $N(0.3, 0.1)$ with limits of 0 and 1 to estimate the SOC for an EV upon its arrival at a parking lot. This process was repeated multiple times to generate a set of scenarios.
\begin{table*}
 	\caption{Weibull distribution parameters for drivers' dwell time; Source:\cite{faridimehr2018stochastic}}
 	\centering
 	\begin{tabular}{| c | c  c  c  c  c  c |}  
 		\hline
 		Type of day & Work & Social & Family & Meal & School & Shopping \\ 
 		\hline
		Weekday & (5.89, 10) & (1.89, 10) & (1.05, 10) & (0.79, 2) & (3.61, 2) & (0.56, 2) \\
 		\hline
 		Weekend & (6.04, 6) & (2.03, 2) & (1.13, 2) & (0.79, 2) & (3.36, 10) & (0.25, 0.5) \\
 		\hline
 	\end{tabular}
 	\label{Params}
 \end{table*} 
 \subsection{Experiments and Results}
 The availability of an EVCS can offer a greater driving range for an EV and make it unnecessary to use other vehicles for longer trips. To examine the effects of different parameters and their impact on the accessibility of EVCSs in the proposed model, we studied different cases and evaluated the model with a sensitivity analysis. We considered 6:00 am -- 9:00 am, 9:00 am -- 12:00 pm, 12:00 pm -- 2:00 pm, and 2:00 pm -- 6:00 pm to be the four time slots in a day. Also, we considered $\$ $900, $ \$ $3,450, and $\$$25,000 to be the average installation costs for level 1, level 2, and level 3 {\sv chargers}, respectively{\sv \cite{smith2015costs}}. Forty scenarios were generated for the two-stage model, and 10 parking lots were used in all cases. Fig. \ref{Heat} shows the heat map for the demand distribution and the locations of parking lots. Parking lots 1 through 8 are the parking structure facilities that have the highest capacity in the area in terms of parking spots. The other two parking lots are smaller parking facilities. A darker color indicates a higher demand for a parking lot. Parking structures are considered to have a capacity for 20 stations, while the parking lots are considered to have a capacity for 5. A set of available parking lots was generated for each EV driver based on the driver's preferred walking distance. Four different metrics were used to assess the performance of the EVCS network design, including EVCS accessibility, {\sv charger} utilization, total walking distance, and average walking distance per driver. Accessibility is defined as the percentage of EV drivers who could charge their vehicles in the charging locations proposed by the two-stage model. Utilization is defined as the percentage of the total time that {\sv a charger} is used by EVs. Because the installation of public charging stations can change travelers' walking patterns, especially in an urban community, we measured the walking distance trend before and after installing charging stations.
\begin{figure}[!htbp]
	\centering
	\includegraphics[scale = 0.4]{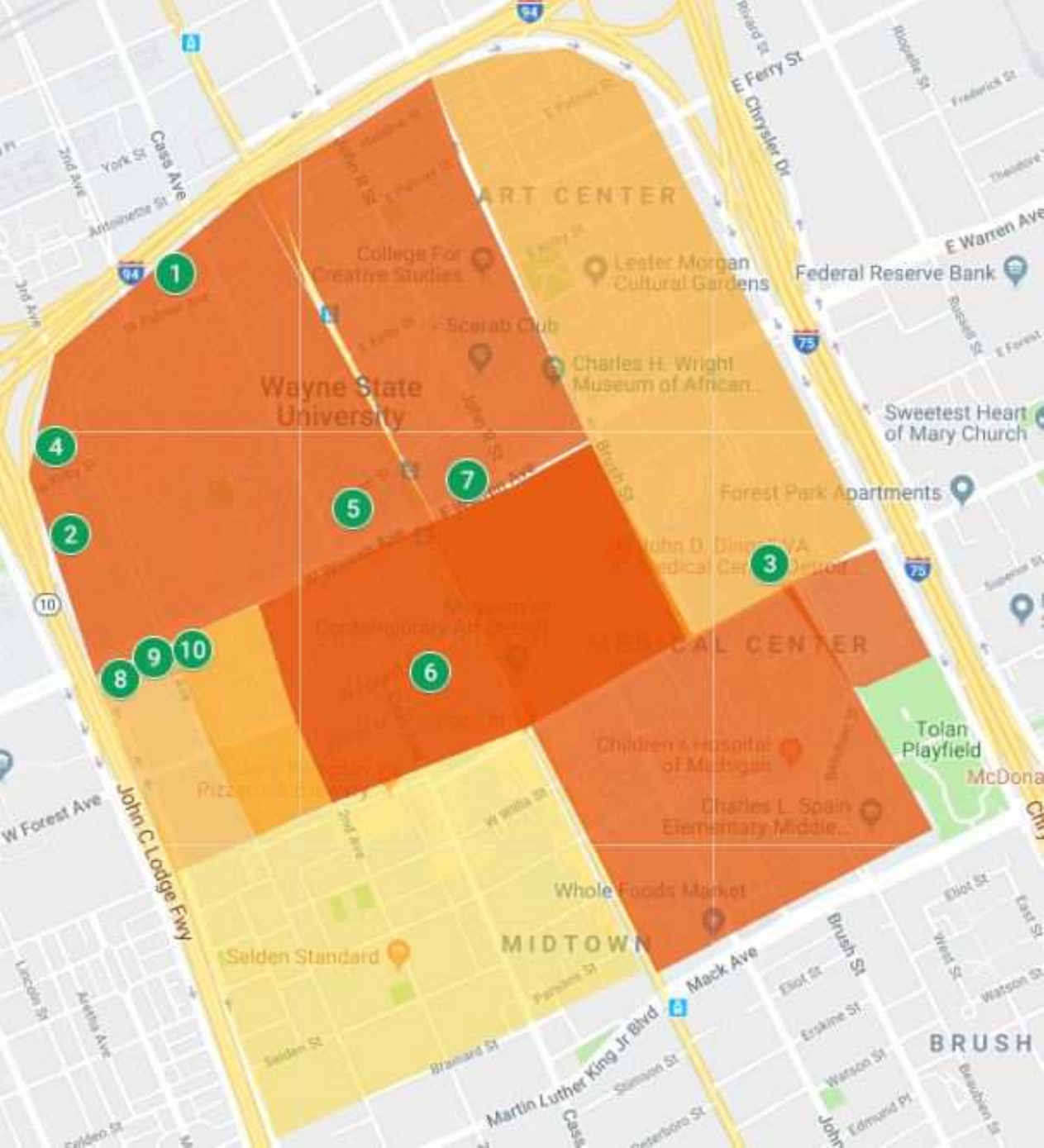}
	\captionsetup{justification=centering}
	\caption {Heat map of the demand flow and location of parking lots in the study area.}
	\label{Heat}
\end{figure}
\begin{figure}[!htbp]
	\centering
	\includegraphics[scale = 0.2]{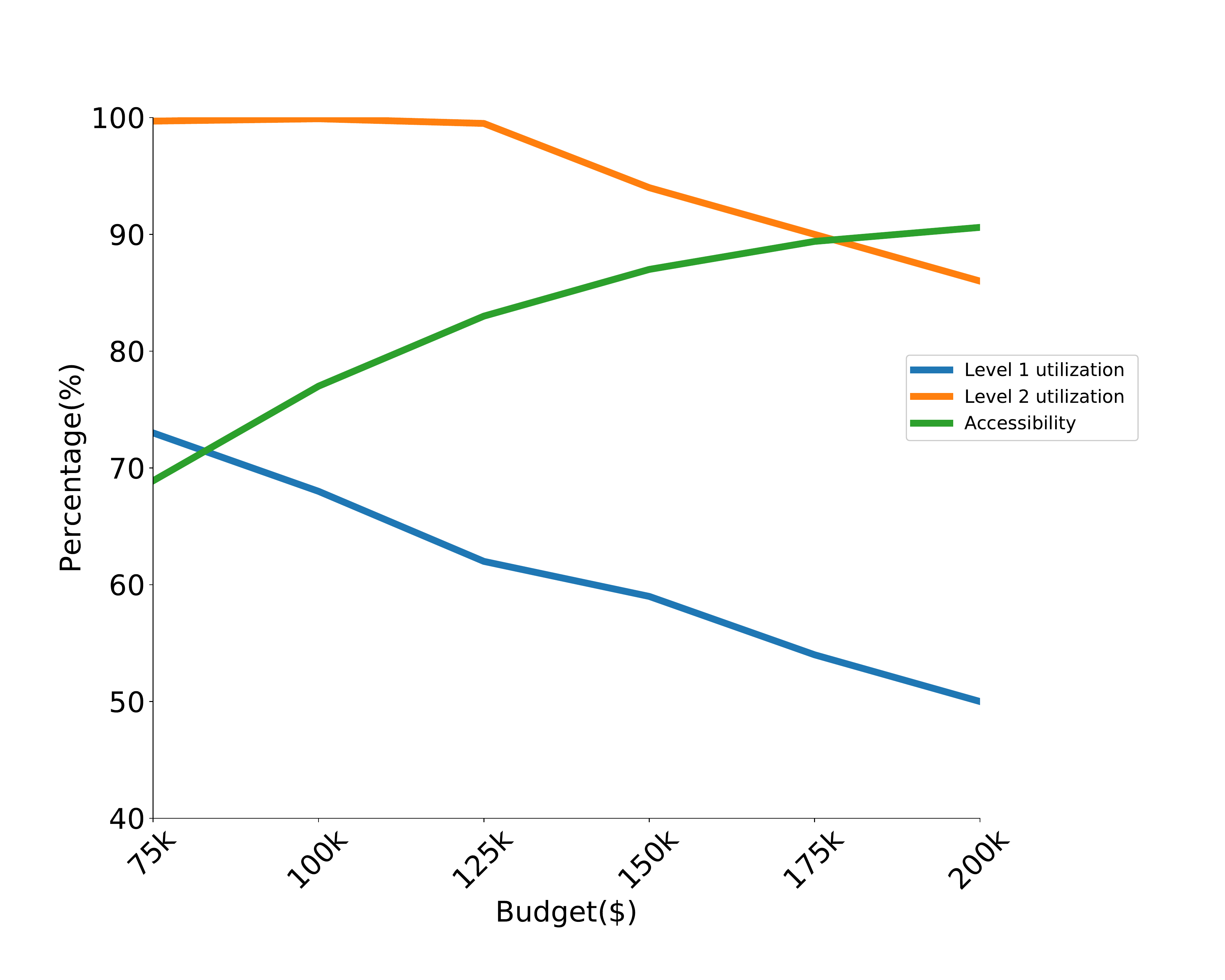}
	\captionsetup{justification=centering}
	\caption {Percentages of accessibility and charging utilization for different budget amounts.}
	\label{Access}
\end{figure}
\begin{figure}[!htbp]
	\centering
	\includegraphics[scale = 0.3]{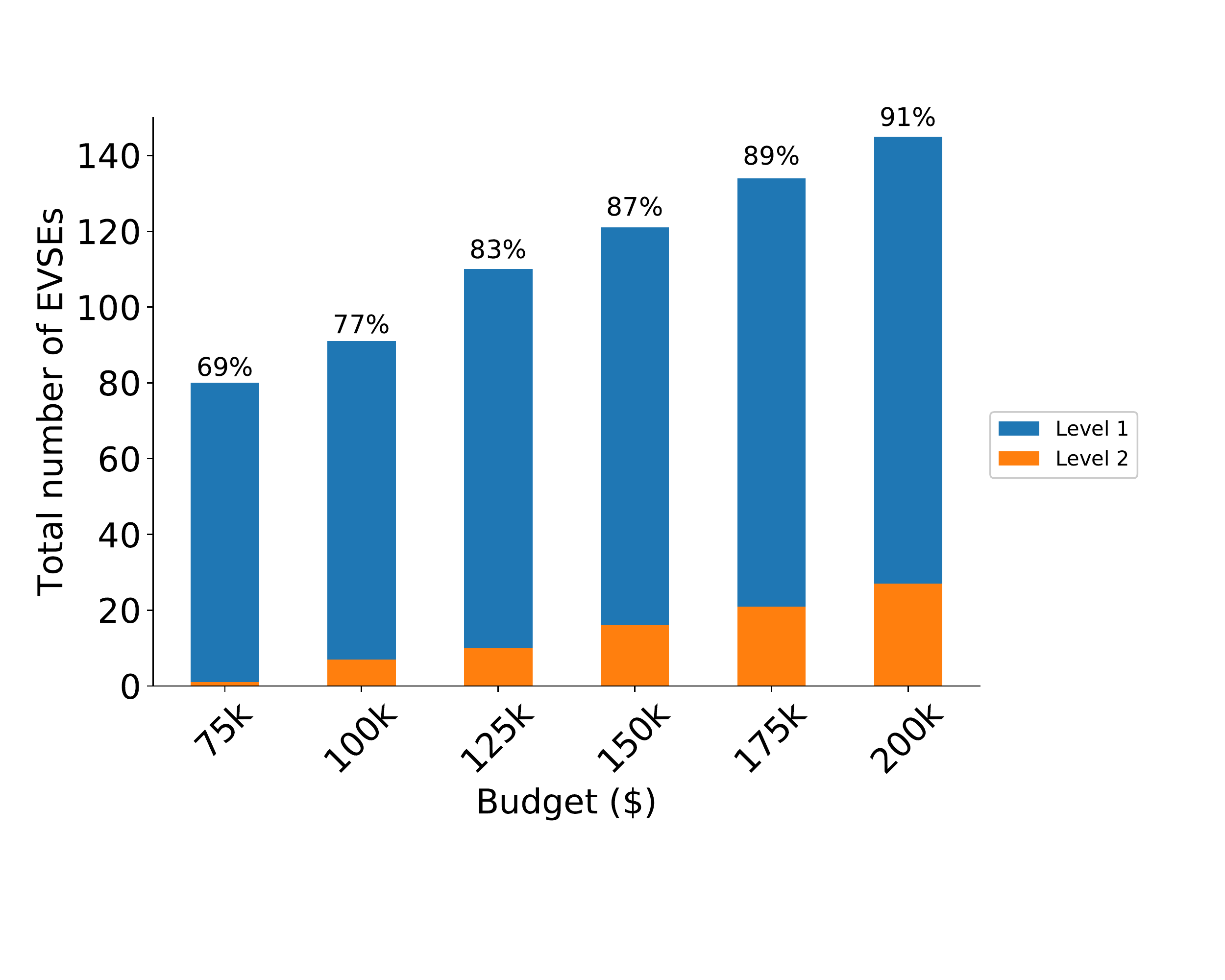}
	\vspace{-10mm}
	\captionsetup{justification=centering}
	\caption {Number of installed level 1 and level 2 chargers for different budget amounts, labeled by accessibility percentages.}
	\label{L2vsL1}
\end{figure}
Fig. \ref{Access} compares the accessibility of charging stations with the utilization of each level of {\sv chargers} as the budget increases. As expected, the results indicate that, with a budget increase, the accessibility of the charging levels also increases. In addition, level 1 utilization  decreases faster than level 2 utilization with a budget increase. This is because more level 2 chargers than level 1 chargers are installed as the  budget increases, since level 2 chargers have a higher utility for commuters. Fig. \ref{L2vsL1} compares the total number of level 1 and level 2 {\sv chargers} that are installed in parking lots based on different budgets, and these are labeled with accessibility percentages. In Fig. \ref{Parking Utilization}, the average utilization percentages for level 1 and level 2 chargers in the ten parking lots are compared for different time slots. The maximum utilization occurs between 9:00 am and 12:00 pm, and this matches the activity types of the case community, which are mostly school and work. In addition, Fig. \ref{Budget-Utilization} illustrates the trade-off between the budget size and the utilization percentage in different time slots for level 1 and level 2 chargers. Figs. \ref{Parking Utilization} and \ref{Budget-Utilization} present the utilization of chargers, which is a major factor in estimating the financial rate of return for investors.\\
\indent Although this was not the focus of the study, increases in travel options enable commuters to dedicate a part of their trip to walking or biking in order to improve their health. Thus, an optimal design of public charging infrastructures can provide opportunities for people in a community to increase their levels of physical activity. This can also improve the livability metrics within a city. {\sv Based on walking preferences, two cases were generated; pessimistic and optimistic cases. Both cases generated from a distribution given by \cite{yang2012walking}; however, in the pessimistic case the distribution is truncated for value over 0.2 mile.} Fig. \ref{Walk} compares the total walking distance and the walking distance per person among EV drivers who access public charging stations {\sv for the two cases}.
\begin{figure}[!htbp]
\subfloat[]{
	\includegraphics[angle=90,scale = 0.30]{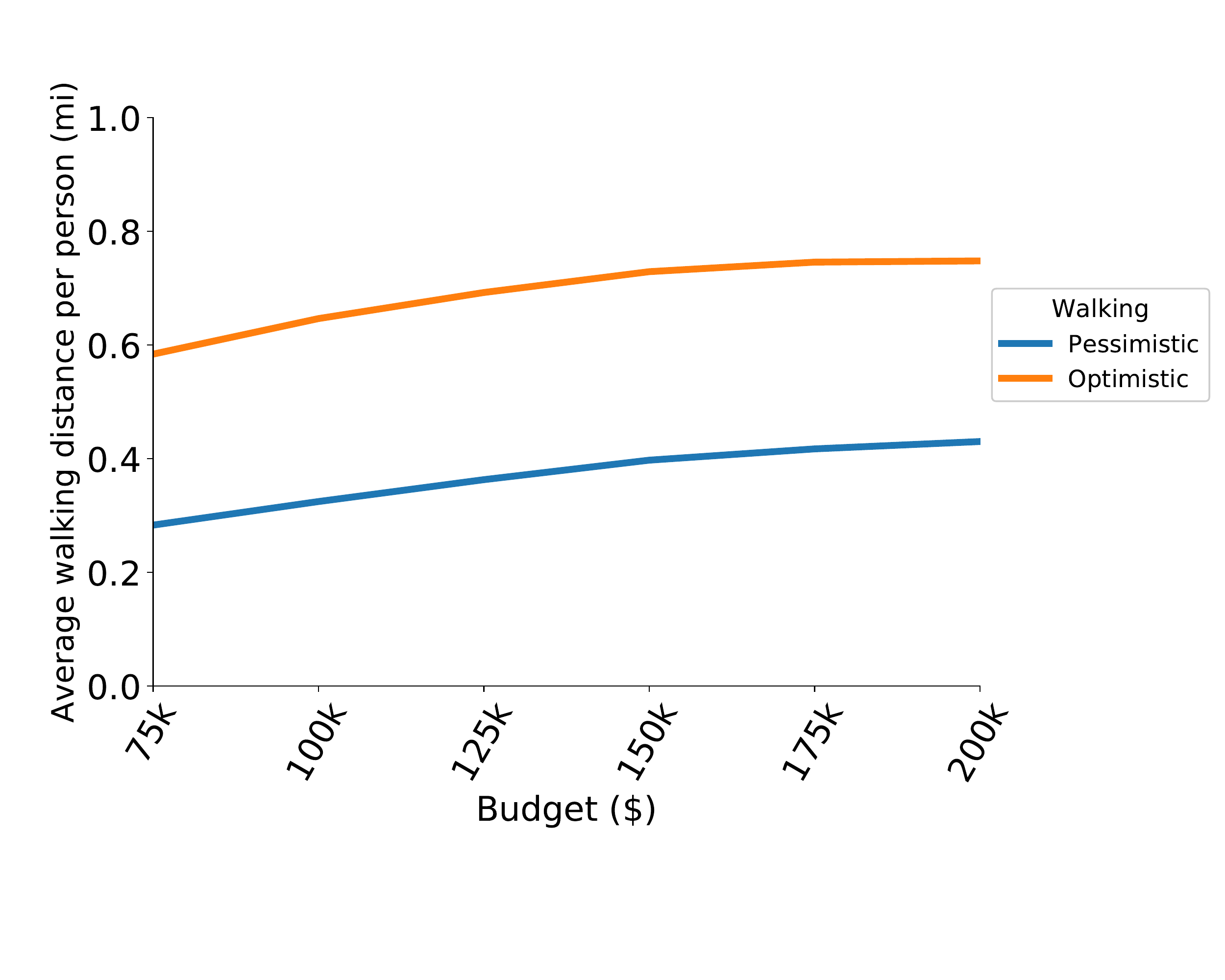}
}

\subfloat[]{
	\includegraphics[angle=90,scale = 0.30]{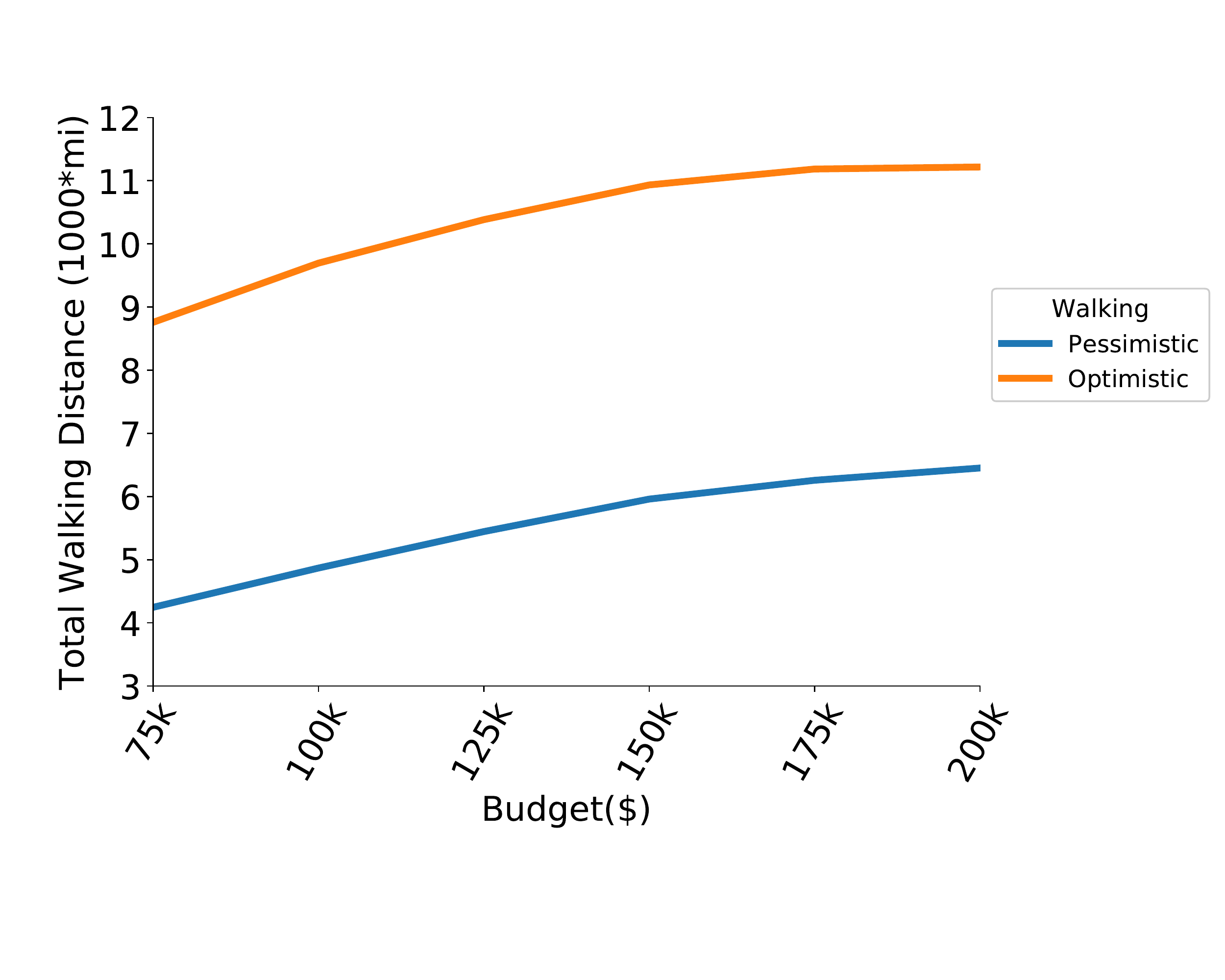}}
	\captionsetup{justification=centering}
\caption{a) Average walking distance per person and b) total walking distance for people who have access to a public EV charging station in both optimistic and pessimistic cases.}

\label{Walk}
\end{figure}
\begin{figure}[!htbp]
\subfloat[]{
	\includegraphics[angle=90,scale = 0.4]{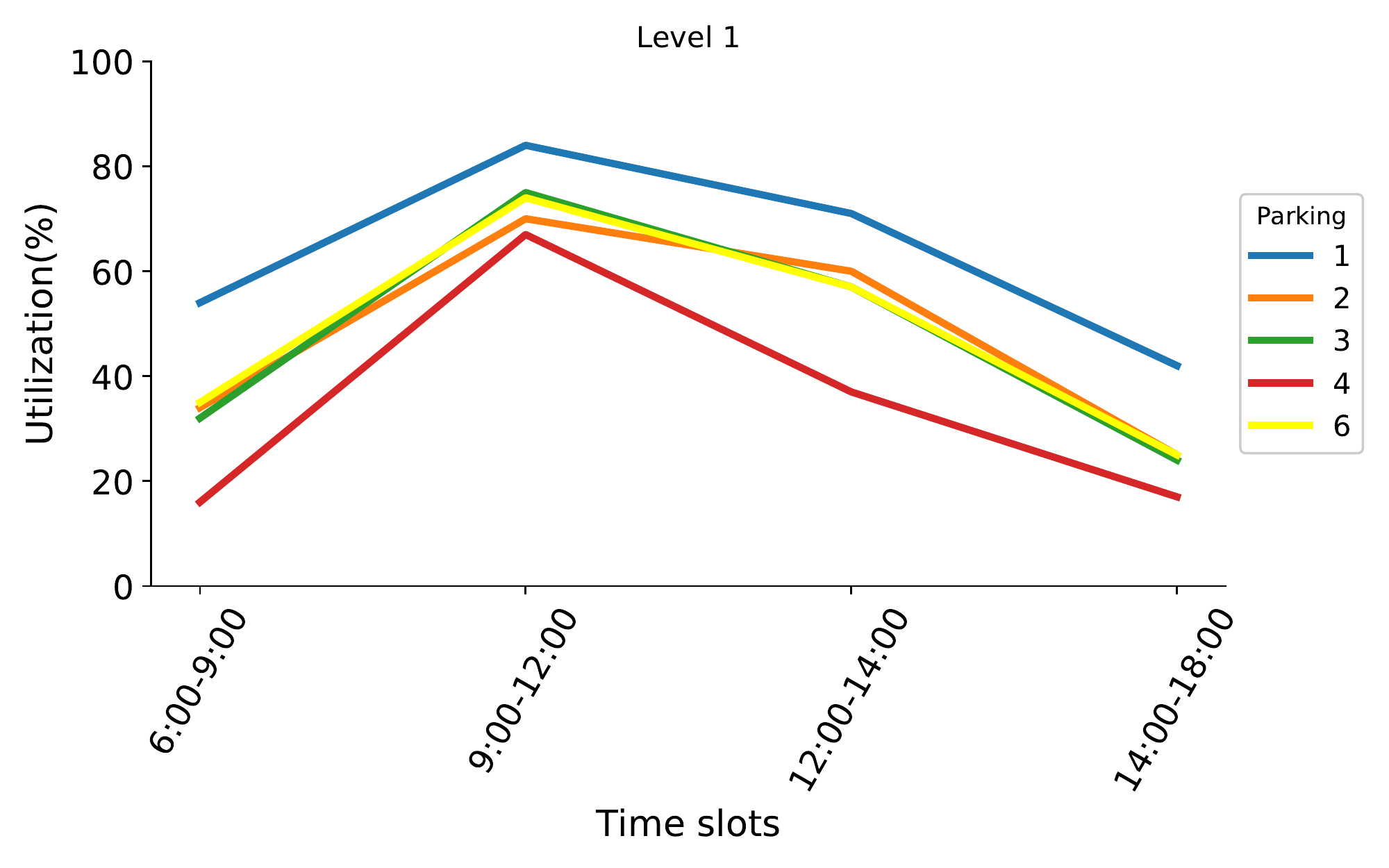}
}

\subfloat[]{
	\includegraphics[scale = 0.4]{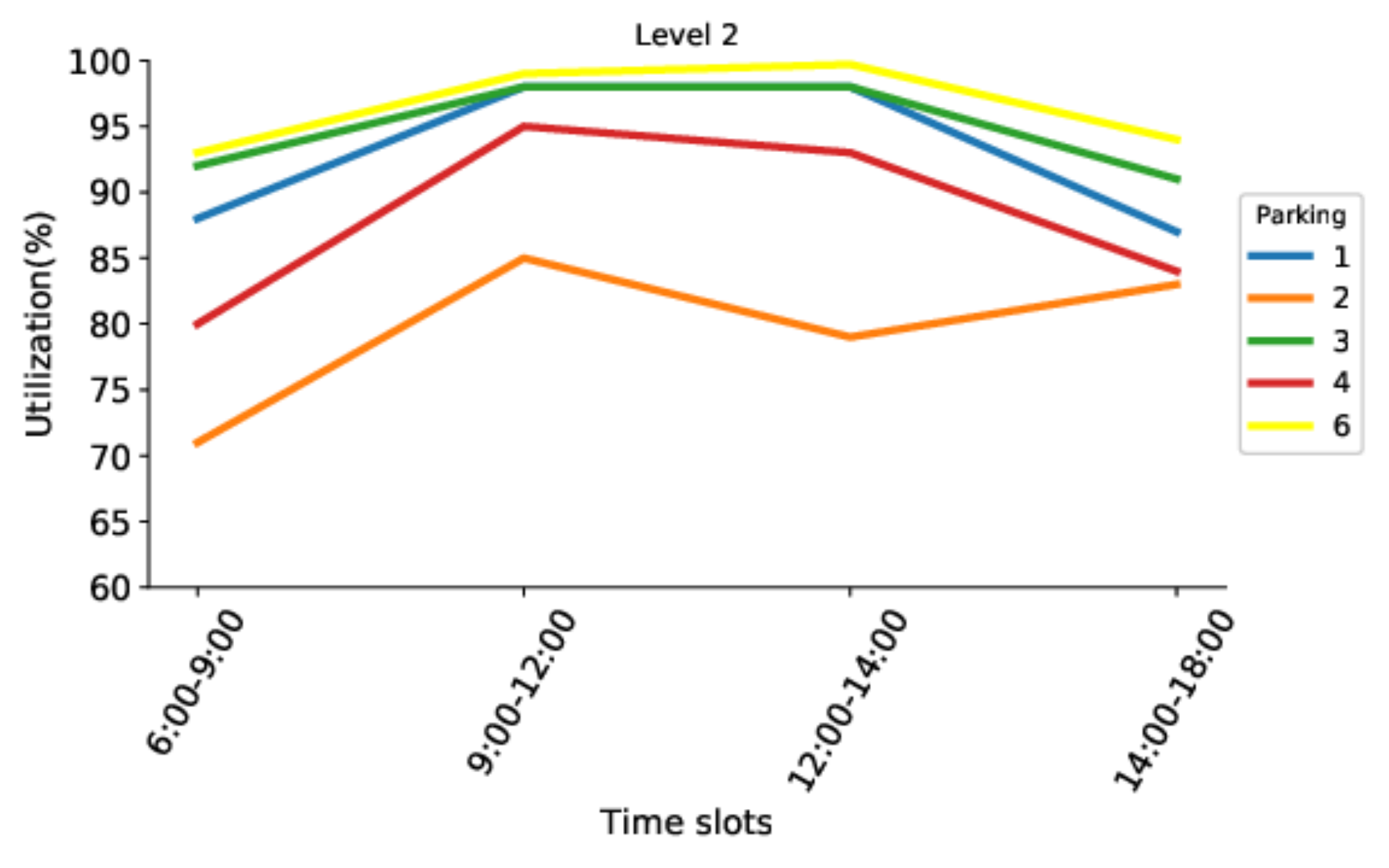}
}
\captionsetup{justification=centering}
\caption{Percentage of average utilization of a) Level 1 and b) Level 2 chargers during each time slot in five parking lots.}

\label{Parking Utilization}
\end{figure}
\begin{figure}[!htbp]
\subfloat[]{
	\includegraphics[scale = 0.4]{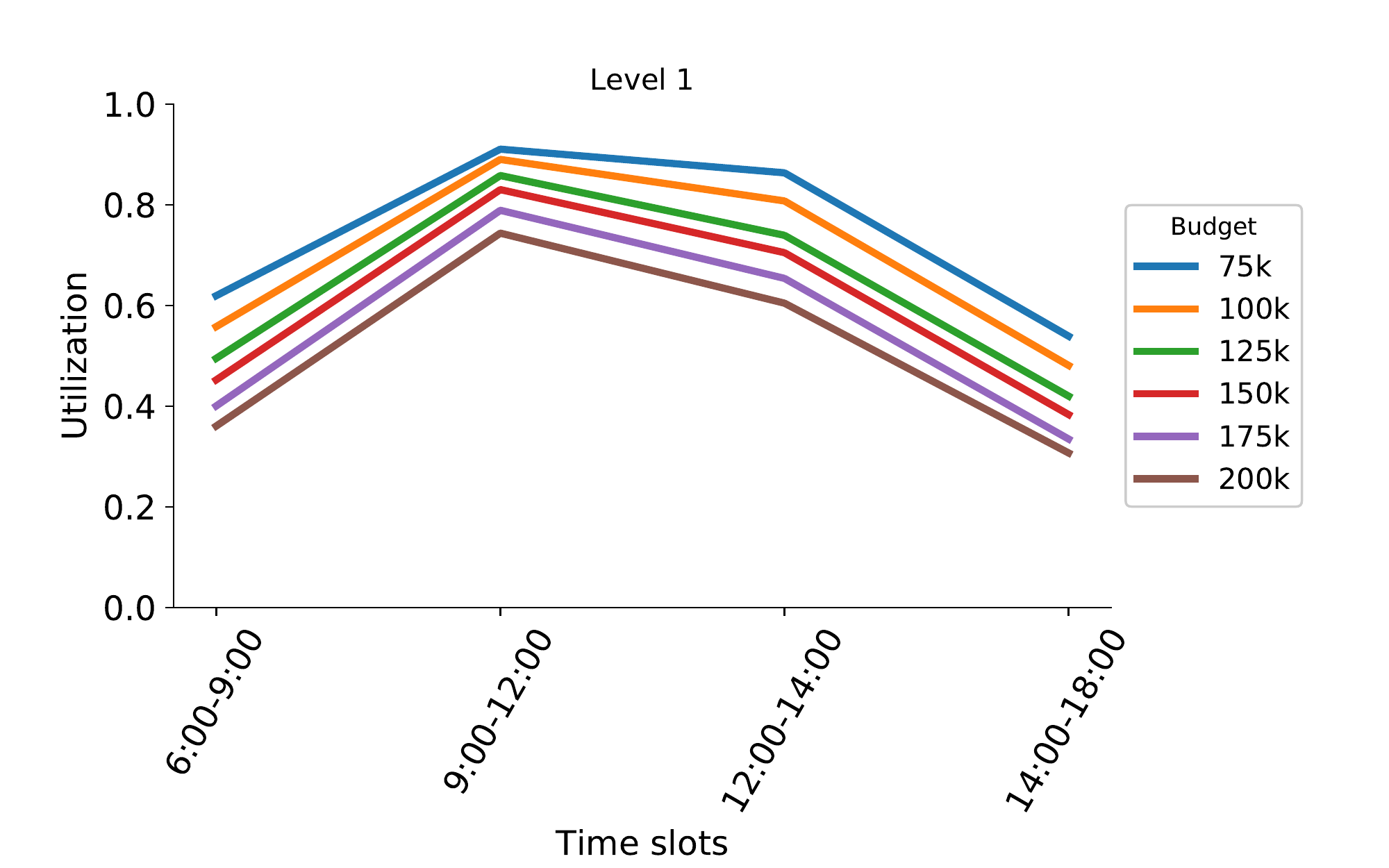}
}

\subfloat[]{
	\includegraphics[scale = 0.4]{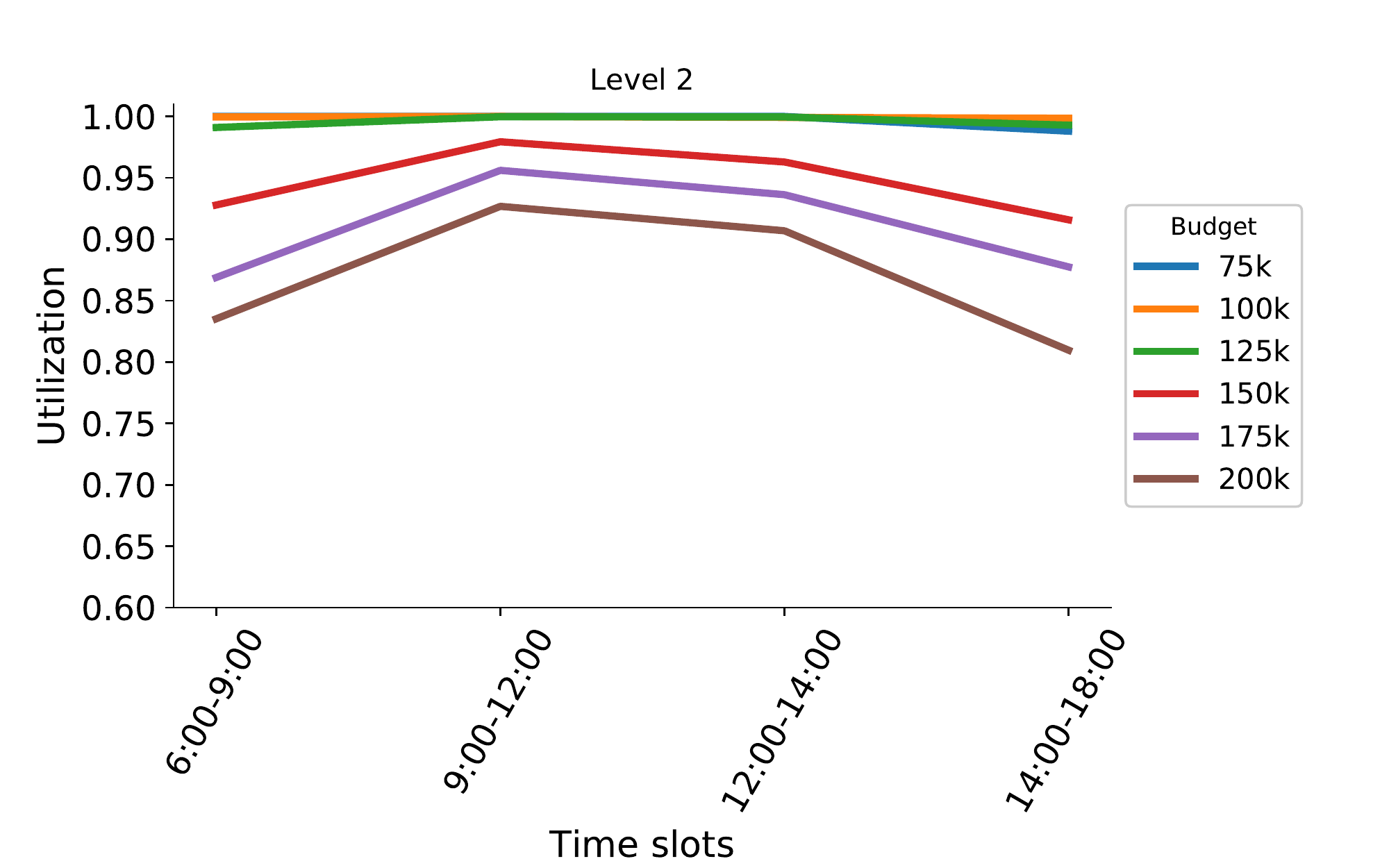}
}
\captionsetup{justification=centering}
\caption{Percentage of average utilization of a) level 1 and b) level 2 chargers in all parking lots during each time slot for different budgets.}
\label{Budget-Utilization}
\end{figure}
\\
\begin{figure*}[!htbp]\centering
\begin{multicols}{2}
    \includegraphics[width=\linewidth]{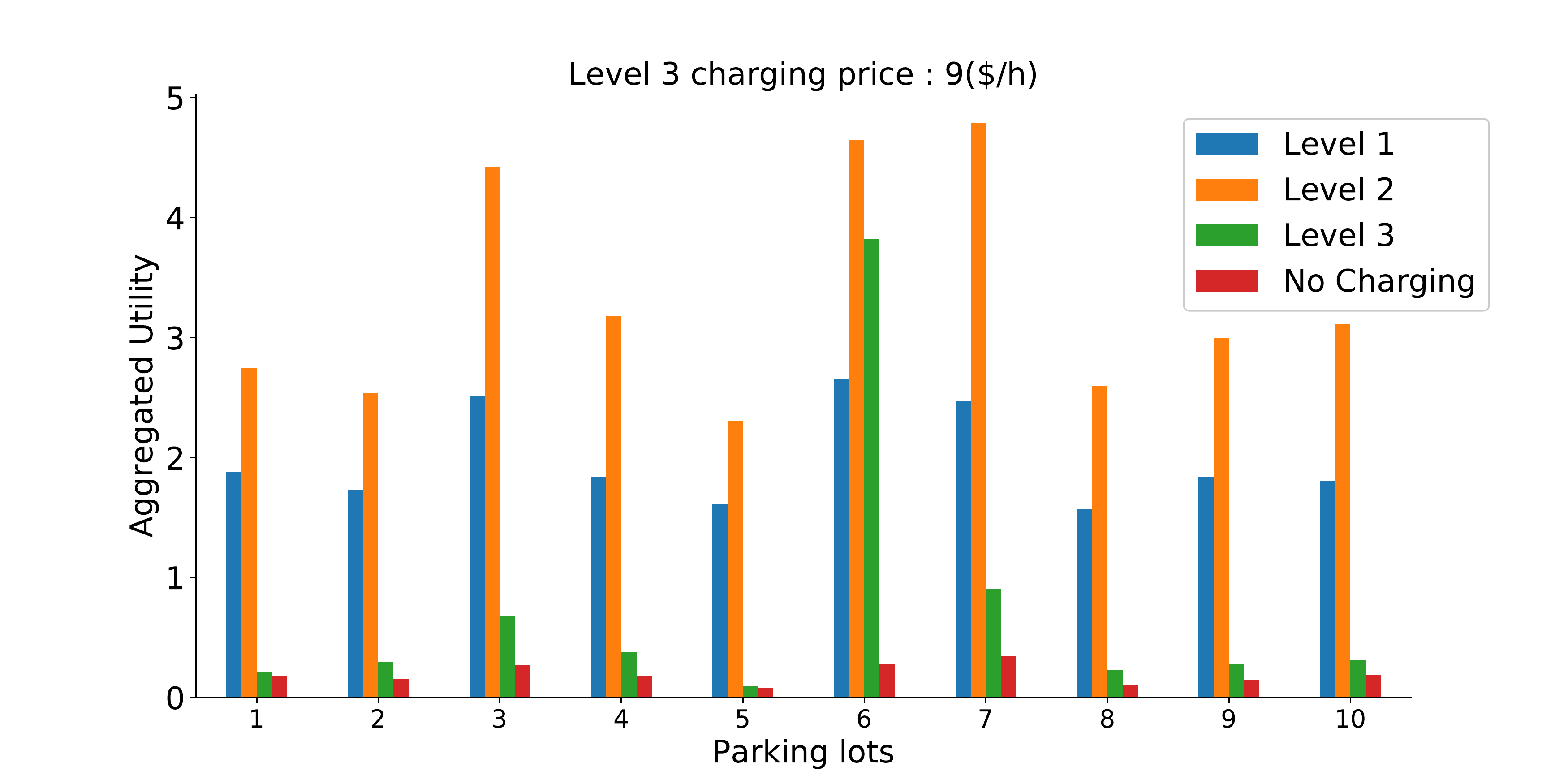}\par 
    \includegraphics[width=\linewidth]{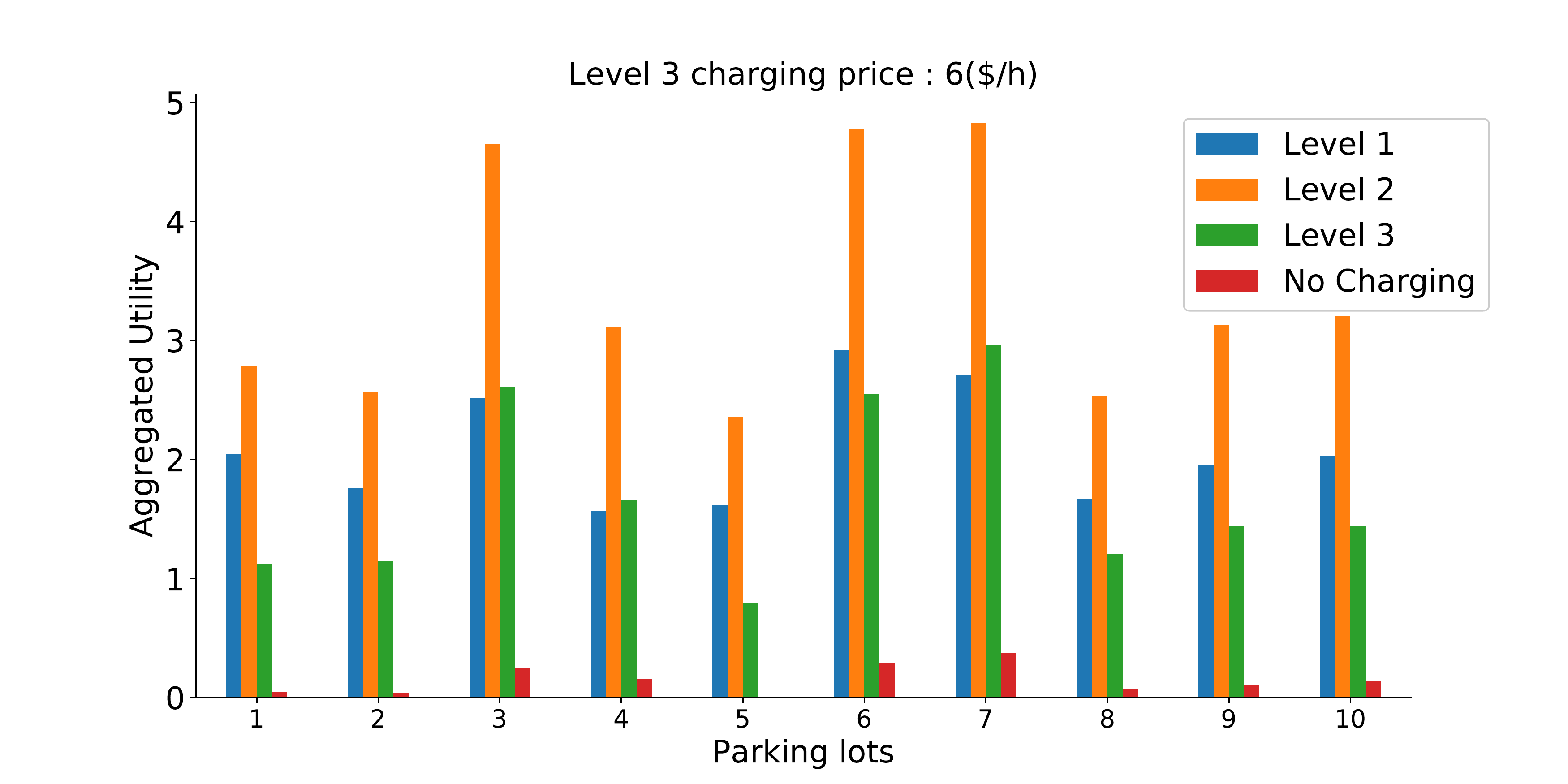}\par 
    \end{multicols}
\begin{multicols}{2}
    \includegraphics[width=\linewidth]{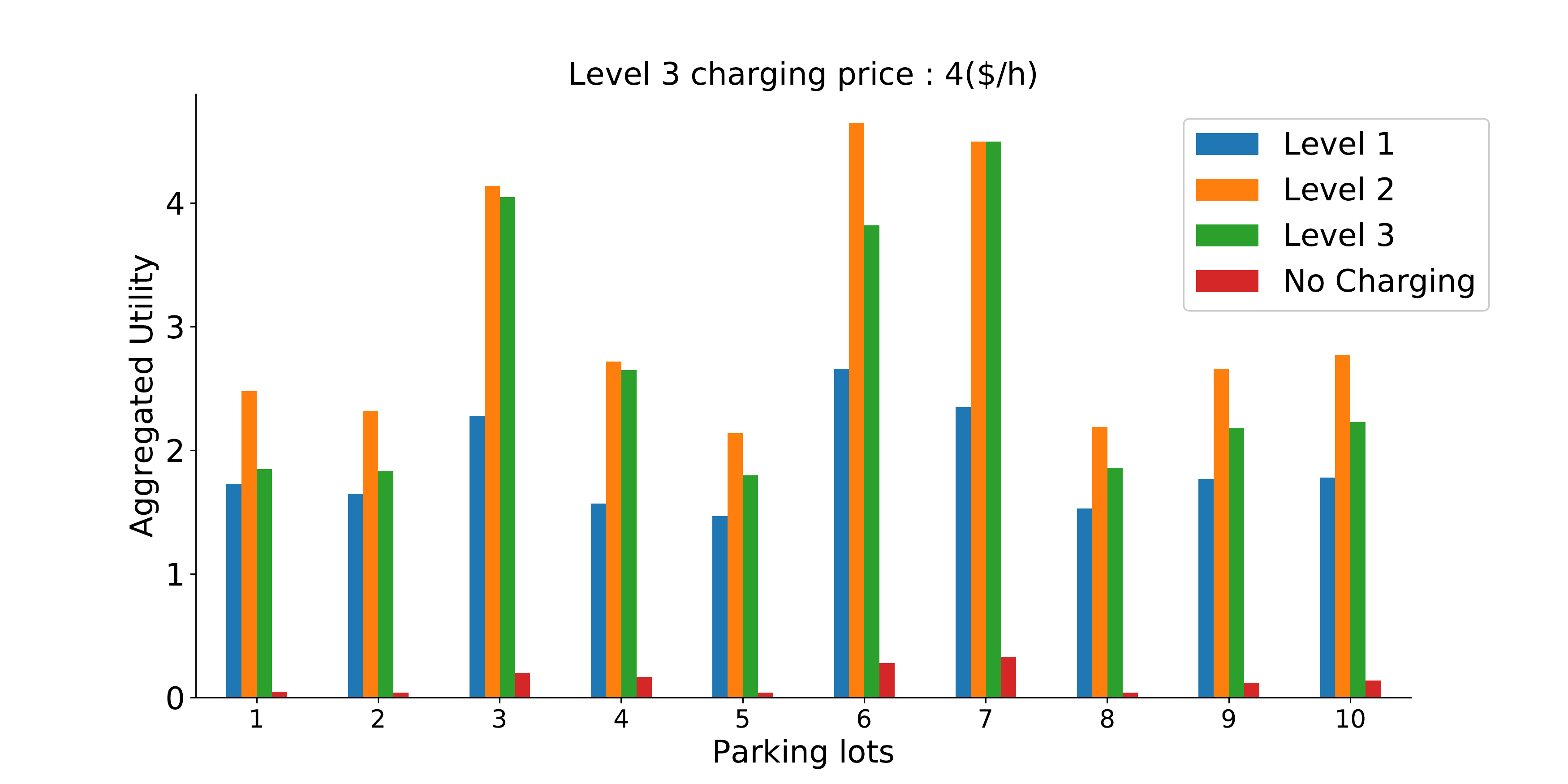}\par
    \includegraphics[width=\linewidth]{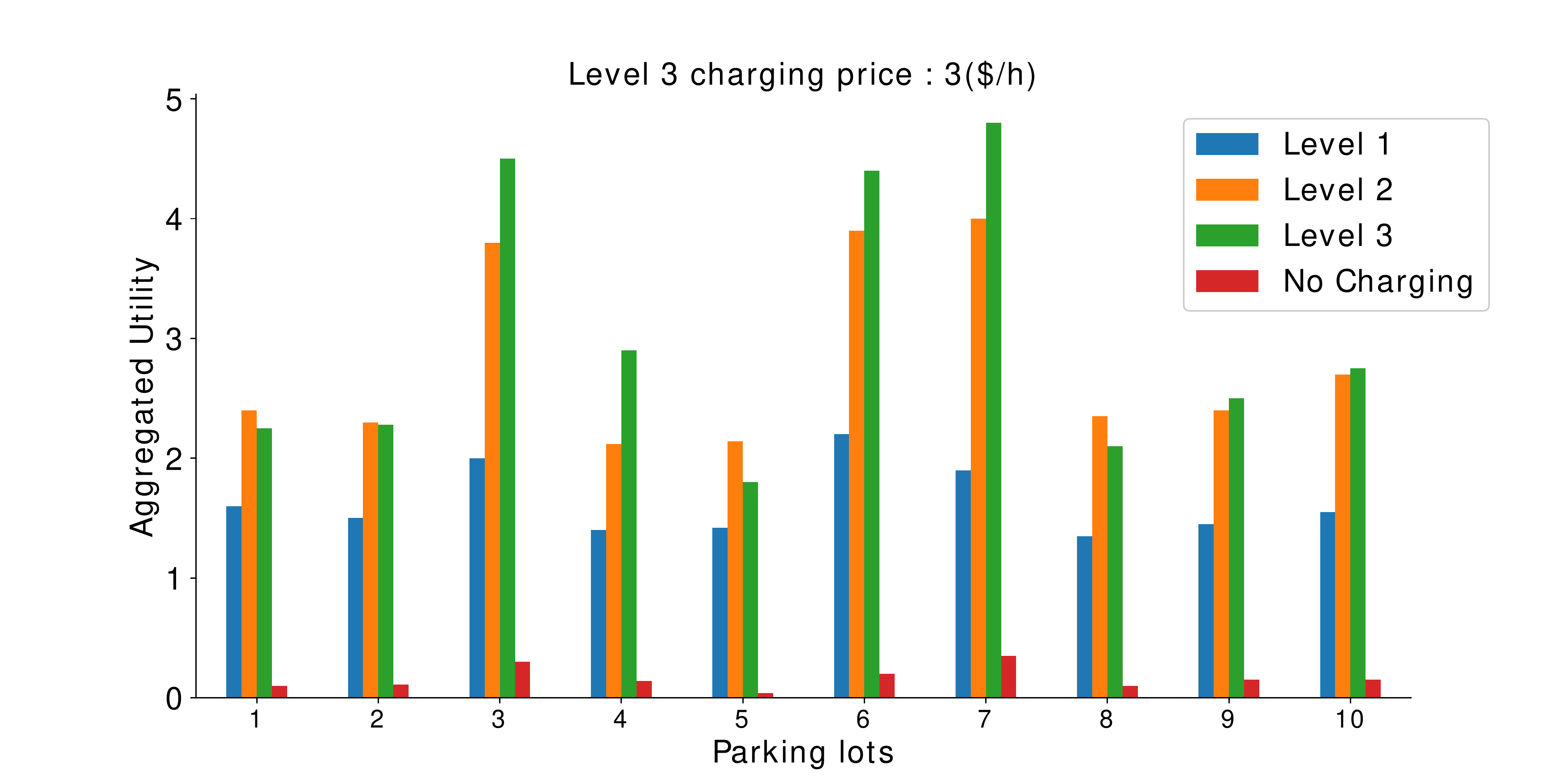}\par
\end{multicols}
\captionsetup{justification=centering}
\caption{Aggregated EV drivers' utility of using different charger levels in different parking lots when the level 3  charging price is 9(\$/h), 6(\$/h), 4(\$/h) and 3(\$/h)}
\label{price}
\end{figure*}
\begin{figure*}[!htbp]

\begin{multicols}{2}
    \includegraphics[width=1.2\linewidth]{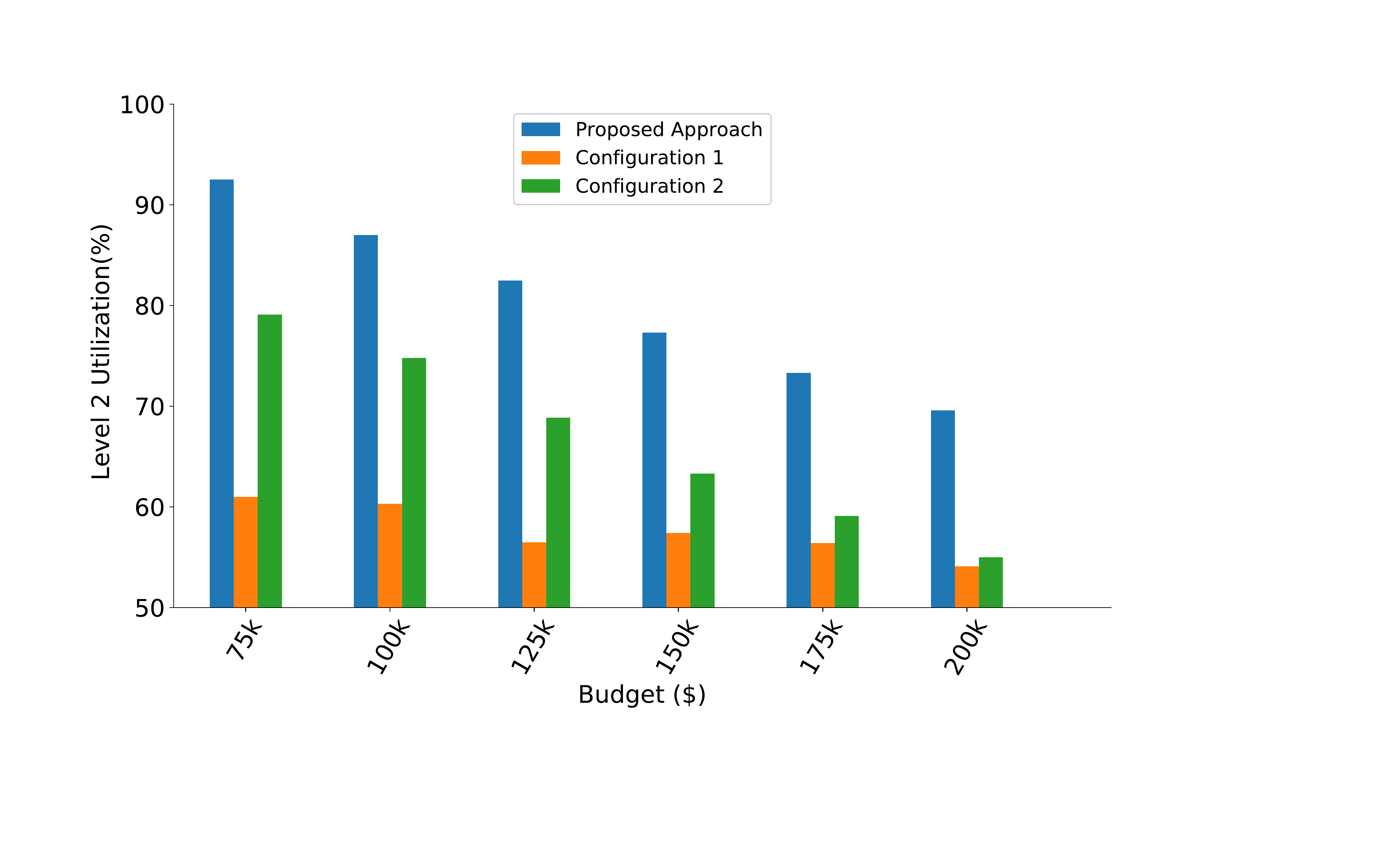}\par 
    \includegraphics[width=1.2\linewidth]{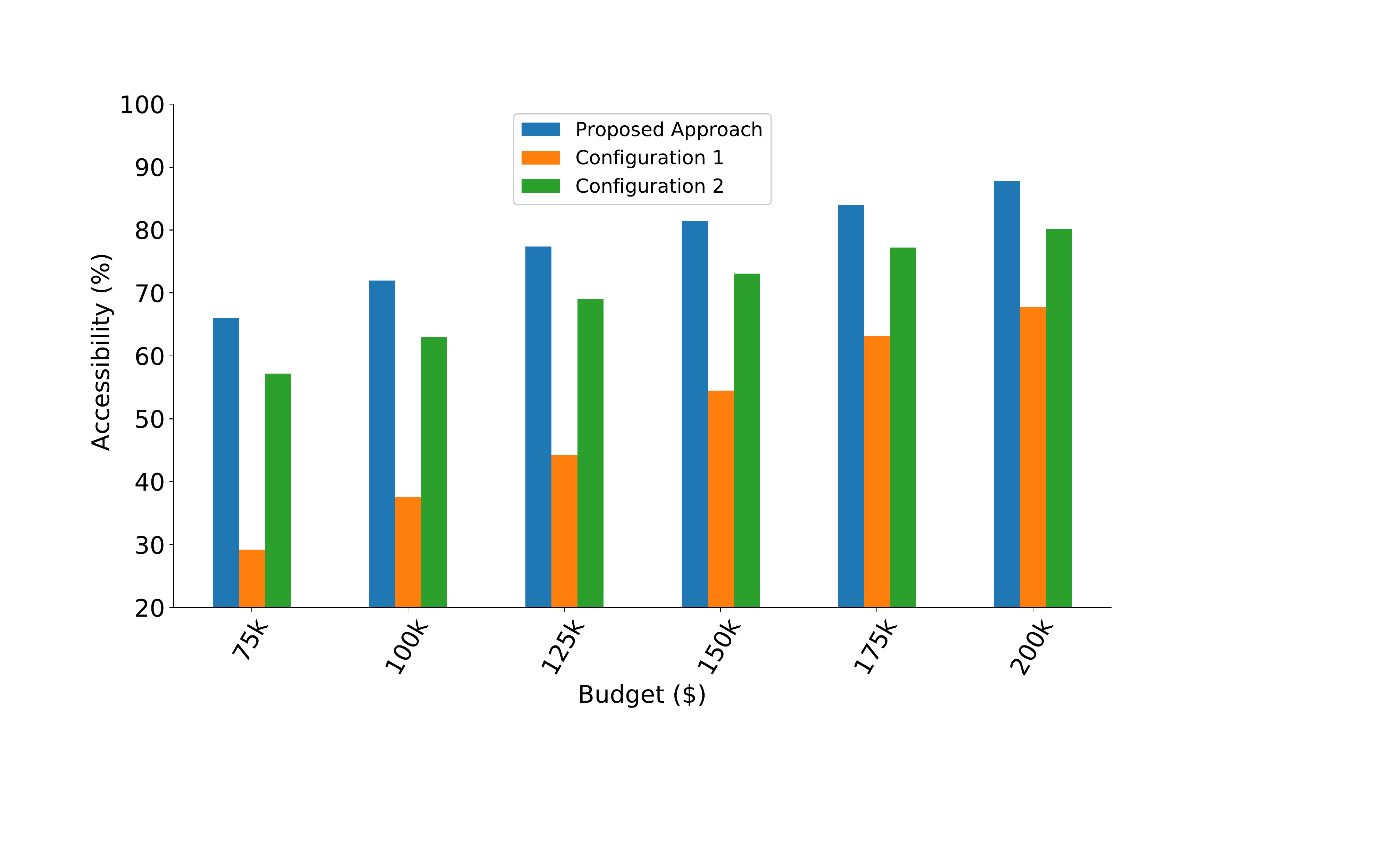}
    \end{multicols}
    \vspace*{-10mm}
    \captionsetup{justification=centering}
    \caption{Percentage of accessibility and level 2 utilization for the proposed approach and two defined configurations}
\label{sim}
\end{figure*}

As the results indicate, level 3 chargers are not installed in the parking locations. This is due to the limited budget size, since the level 3 installation cost is relatively high compared to the cost for the other levels of chargers. Based on our data sources, {\sv Fig. \ref{price}} indicates that within the urban community, people are unwilling to use level 3 chargers when the price is about 35 cents per minute. However, the utility of level 3 increases as the charging price decreases. When the charging price is finally lowered to $\$ $3 per hour, the preference for using fast chargers is higher than for level 1 and  level 2 chargers.
\subsection{Data-driven Simulation}
{\sv 
	A data driven simulation study was performed to evaluate the efficacy of the proposed work with the approach presented in \cite{faridimehr2018stochastic} which ignores  choice modeling. Two configurations were considered to establish the baseline for \cite{faridimehr2018stochastic}: configuration 1 - where all the  chargers are considered to be level 2; configuration 2 - 80\% of  the parking lots' capacity is allocated for installing level 2 chargers and the remaining capacity is assigned to level 1. Based on the network designs proposed by each of the approaches, a simulation experiment was performed to measure the `accessibility' for each driver based on the availability and choice during their arrival. If an EV driver could not find his/her first-choice of charger, the driver will search for the next best alternative. We used two performance metrics: ``accessibility" is the percentage of EV drivers who could use the chargers at their arrival; ``utilization" is calculated as the total number of hours that a charger is used by drivers over the total number of hours within the simulation period. Fig. \ref{sim} shows that the proposed approach considering choice modelling has better accessibility and utilization at each of the budget levels compared to other two configurations using the model proposed in \cite{faridimehr2018stochastic}.	For each budget, we used 200 replications for simulation, and on average, the proposed approach could increase the accessibility by 29\% and 10\%, and level 2 utilization by 23\% and 14\% compared to using configurations 1 and 2 without choice modelling, respectively.}
\section{Conclusion}\label{Con}
In this research, we propose a choice modeling approach embedded in a two-stage stochastic programming model for EV charging station network design within a community. Various factors, such as the total EV flow, arrival and dwell times, batteries' SOCs upon arrival, and the distances that EV drivers are willing to walk, are considered in the model as sources of uncertainty. Factors such as charging prices, the cost of charging at home, driving range charges, total trip distances, and dwell times are used to capture BEV drivers' charging choice behaviors. The framework suggests relationships among the budget size and the capacity and accessibility of the charging stations for EVs. A choice model utility function was helpful in determining preferences for different charger types among the EV drivers. The proposed model presents a robust {\sv charging station} network solution to any future changes in the community's pattern of willingness to walk. {\sv The computational results indicate that the optimal layout of charging stations should include a mix of different chargers. For the given data, accessibility improved with an increase in the budget, and more level 2 chargers are installed compared to the level 1. Also, with increase in budget, utilization of chargers decreased. Based on current pricing policy and utility function, level 3 is not preferred in urban communities. The simulation study in post optimization helps to study the influence of price on utility function and subsequent improvement in preferences for level 3 chargers. We ran experiments to quantify the influence of stochastic data parameters, and dwell time had the highest impact on accessibility to charging stations. Furthermore, a data-driven simulation study was conducted to evaluate the benefits of using choice modelling approach.} We solve the proposed two-stage stochastic programming models using sample average approximation and the L-shaped decomposition method. We compare the computational results with single- and multi-cut variants of the L-shaped method for a deterministic equivalent problem formulation. We present a case study using the model's results along with various insights, including (among others) a demarcation in the utility function for different charger types and the sensitivity of the optimal network to the budget. Potential research extensions might add multi-modal transportation options to the existing framework to consider the interactions between various transportation modes and EV drivers' choices. {\sv Another extension could be a study on impact of EV charging station loads on the electricity distribution network within the current framework.} From modelling perspective, other appropriate risk measures can be added to the recourse function, and the subsequent analysis can help us understand the implications of dispersion statistics while choosing optimal solutions. \textcolor{black}{From an algorithm perspective, scalability of the proposed algorithm to large planning regions needs further investigation. There are two possible avenues. Effective meta-heuristic methods can be investigated for the current formulation.  An alternate approach would to be to explore hierarchical approaches. For example, aggregate analysis can first identify the required density of charging stations for different regions/neighborhoods and detailed charging network planning can then be carried by the proposed algorithm.}
\newpage
\bibliographystyle{plain}
\bibliography{evsc}
\end{document}